\theoremstyle{plain} 
\newtheorem{theorem}{Theorem}[section]
\newtheorem{lemma}[theorem]{Lemma}
\newtheorem{corollary}[theorem]{Corollary}
\newtheorem{proposition}[theorem]{Proposition}
\newtheorem{remark}[theorem]{Remark}
\newtheorem{definition}[theorem]{Definition}
\numberwithin{equation}{section}
\begin{document}

\begin{center}

{\bf\large 
Duality between the two-locus Wright-Fisher Diffusion Model and 
the Ancestral Process with Recombination}

\vspace{1cm}

Shuhei Mano

{\it The Institute of Statistical Mathematics,\\
Tachikawa 190-8562, Japan\\
and\\
Japan Science and Technology Agency,\\
Kawaguchi 332-0012, Japan}

\end{center}

\vspace{3cm}

\noindent
Address for correspondence: The Institute of Statistical Mathematics, 
10-3 Midori-cho, Tachikawa, Tokyo 190-8562, Japan; Email: {\tt smano@ism.ac.jp}

\newpage

\noindent
{\bf Abstract}

\noindent

Known results on the moments of the distribution generated by the two-locus
Wright-Fisher diffusion model and a duality between the diffusion process
and the ancestral process with recombination are briefly summarized.
A numerical methods for computing moments by a Markov chain Monte Carlo and
a method to compute closed-form expressions of the moments are presented. 
By using the duality argument properties of the ancestral recombination 
graph are studied in terms of the moments. 

\vspace{1cm}

\noindent
Keywords: duality, diffusion process, ancestral graph, population genetics, 
recombination

\newpage

\section{Introduction}

%
%

In classical population genetics theory behavior of frequency of a gene 
type (allele) has been a central issue (for example, \cite{CrowKimura1971}).
The fate of the allele frequency has been modeled by a diffusion process, 
where the population size is assumed to be sufficiently large. The diffusive
limit, which is called as the Wright-Fisher diffusion model, is expected to 
illustrate actual evolution of the allele frequency in the population. 
Numerical methods for computing likelihood of a sample taken from 
the equilibrium have attracted much interest (for example, \cite{Tavare2004}).
Explicit and closed-form expressions of the whole process have importance
by their own right. Unfortunately, their availability has been limited.
For the one-locus two-allele model without mutation and other evolutionary 
forces closed-form expressions of the probability density of the allele 
frequency at a fixed time were obtained in terms of orthogonal 
polynomials \cite{Malecot1948},\cite{Kimura1955a}. In contrast, for two-locus
models known closed-form expressions have been limited to several moments of 
the distribution generated by the diffusion process \cite{OhtaKimura1969},
\cite{Littler1972}. A comprehensive survey at early 1970's, which is still 
useful, is \cite{Littler1972}. Recently, closed-form expressions of a class 
of moments were obtained in terms of orthogonal polynomials \cite{Mano2005}. 

%
%

The concept of duality has been a powerful tool in stochastic analysis of 
interacting particle systems \cite{Liggett1985}. In population genetics 
theory the moment dual of the Wright-Fisher diffusion model was firstly used 
in \cite{Shiga1981}. A genealogical process of a sample taken from
a population, which is known as the coalescent \cite{Kingman1982}, has been 
useful for population genetic data analyses. The duality was used to obtain
branching-coalescent processes as models of natural selection 
\cite{KroneNeuhauser1997} and conversion bias \cite{Mano2009b}. Ancestral 
processes are process of numbers of ancestral lineages in a section of 
ancestral graphs, which are analogues of the coalescent genealogy. The dual 
of the one-locus two-allele model Wright-Fisher diffusion model with 
directional selection \cite{Kimura1955b} is an ancestral process, which is
the number of ancestral lineages in a section of the ancestral selection graph.
The dual is a birth and death process with linear birth and quadratic death 
rates. It was demonstrated that properties of the birth and death process can 
be studied by referring to classical results on the Wright-Fisher diffusion 
model \cite{Mano2009a}. For the multi-locus model an analogue of 
the coalescent genealogy, called the ancestral recombination graph (ARG), 
was introduced \cite{Griffiths1991}. The two-locus ARG integrates marginal
genealogies at the two loci. The ancestral process, which is a process of 
the numbers of ancestral lineages in a section of the ARG, is the dual of 
the two-locus two-allele Wright-Fisher diffusion model
\cite{EthierGriffiths1990a}. 

In section 2, we briefly summarize known results on the moments of the 
distribution generated by the two-locus two-allele Wright-Fisher diffusion
model. In section 3, the moment duality between the diffusion process and 
the ancestral process, which is a process of the numbers of ancestral lineages
in a section of an ARG, is introduced. A numerical method for computing moments
at a fixed time by a Markov chain Monte Carlo is introduced. In section 4, 
method to compute closed-form expressions of the moments by using an ARG 
terminology is presented. In section 5, by using the duality argument 
properties of the ARG are studied in terms of the moments.

\section{Summary of known results on the moments}

%
%

Consider a random mating monoecious diploid population consisting of $N$ 
individuals. Two linked loci $A$ and $B$ are segregating, where recombination
fraction between the two loci is $r$. Pairs of alleles $A_1, A_2$ and 
$B_1, B_2$ are in the loci $A$ and $B$, respectively. A diffusive limit is 
measuring time in units of $2N$ generations and $2N\rightarrow\infty$, while 
$\rho=4Nr$ is kept constant. Let frequencies of gametes $A_1B_1$,
$A_1B_2$, $A_2B_1$, and $A_2B_2$, be respectively, $x_1$, $x_2$, $x_3$, and 
$1-x_1-x_2-x_3$. Frequencies of the alleles $A_1$ and $A_2$ are denoted by 
$x$ and $1-x$, respectively, and those of the alleles $B_1$ and $B_2$ are 
denoted by $y$ and $1-y$, respectively. Then $x=x_1+x_2$ and $y=x_1+x_3$. 
Set $z=x_1(1-x_1-x_2-x_3)-x_2 x_3$, which is a measure of association 
between $x$ and $y$. The limiting diffusion process 
$\{x_1(t),x_2(t),x_3(t);t\ge 0\}$ is defined in a simplex
\begin{equation*}
K: 0\le x_1 \le x_1+x_2 \le x_1+x_2+x_3\le 1.
\end{equation*}
Let $H=\Phi(K)$, where $\Phi(x_1,x_2,x_3)=(x,y,z)$ is a 
$C^\infty$-diffeomorphism of $K$ onto $H$. The generator of the diffusion
process $\{x(t),y(t),z(t);t\ge 0\}$ in $H$ is \cite{OhtaKimura1969}
\begin{eqnarray}
{\mathcal L}&=&
\frac{x(1-x)}{2}\frac{\partial^2}{\partial x^2}+
\frac{y(1-y)}{2}\frac{\partial^2}{\partial y^2}+
z\frac{\partial^2}{\partial x\partial y}
+z(1-2x)\frac{\partial^2}{\partial x\partial z}
+z(1-2y)\frac{\partial^2}{\partial y\partial z}
\nonumber\\
&&
-z\left(1+\frac{\rho}{2}\right)\frac{\partial}{\partial z}
+\frac{1}{2}\left\{xy(1-x)(1-y)+z(1-2x)(1-2y)-z^2\right\}
\frac{\partial^2}{\partial z^2}.
\label{generator}
\end{eqnarray}

%
%

In the classical population genetics theory some problems of general interests
are concerning events of fixation. The probability of eventual fixation of
an allele and the probability density of the time to the fixation have been 
studied. Some of these properties can be studied in terms of moments of 
the distribution generated by the model by using moment inversion formula.
The probability of eventual fixation of a gamete in the two-locus two-allele 
Wright-Fisher diffusion model governed by the generator (\ref{generator}) is
obtained immediately by the moments. In fact, since the stationary density is 
atomic, $\displaystyle\lim_{t\rightarrow\infty}{\mathbb E}[x(t)y(t)]$ gives 
the fixation probability of the gamete $AB$. For an allele two types of 
fixation can be defined; first fixation occurs when the first of the four 
alleles is lost and the final fixation occurs when an allele at the other 
locus is lost (a gamete fix). These fixation times are
\begin{eqnarray*}
T_1&=&\inf\{t\ge 0; x(t)(1-x(t))y(t)(1-y(t))=0\},\\
T_0&=&\inf\{t\ge 0; x(t)(1-x(t))+y(t)(1-y(t))=0\},
\end{eqnarray*}
respectively. The probability densities are
\begin{eqnarray*}
{\mathbb P}[T_1<t]
&=&\lim_{n\rightarrow\infty}{\mathbb E}
\left[\{1-x(t)(1-x(t))y(y)(1-y(t))\}^n\right],\\
{\mathbb P}[T_0<t]
&=&\lim_{n\rightarrow\infty}{\mathbb E}
\left[\{1-x(t)(1-x(t))-y(y)(1-y(t))\}^n\right],
\end{eqnarray*}
respectively. It seems impossible to obtain explicit and closed-form 
expressions of these limits (see Section 4). Nevertheless, some of the
moments whose closed-form expressions are available are useful to obtain 
upper bounds and approximate formula of these 
probabilities \cite{Littler1972}. By the same reason a closed-form expression 
of the joint distribution of $(x(t),y(t),z(t))$ at a fixed time $t$ is not 
available. 



Let us introduce a classification of the moments of the distribution generated
by the generator (\ref{generator}).

\begin{definition}
The {\it rank} and {\it class} of a moment, which is an expectation of 
a monomial $x^lu^mx_1^n$,\,$l,m,n \in {\mathbb Z}_+$, are $l+m+2n$ and 
$n+\min\{l,m\}$, respectively. The rank is equals to or larger than 
twice of the class.
\end{definition}

\begin{remark}
The class-zero moments have closed-form expressions and they are the moments
of the one-locus Wright-Fisher diffusion model \cite{Kimura1955a}. 
The class-one moments have closed-form expressions \cite{Mano2005} (see below).
\end{remark}

Other moments whose closed-form expressions have been obtained are 
expectations of a type of polynomials:

\begin{lemma}[Lemma 3.6.1 of \cite{Littler1972}]
The manifold of polynomials spanned by the set of polynomials
$\{x^l(1-x)^ly^m(1-y)^mz^n(1-2x)^a(1-2y)^a\}$, where $a=0,1$ and $l,m>0$ if
$n=0$, is closed under the operation of ${\mathcal L}$.
\end{lemma}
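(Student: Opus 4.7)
The approach is to apply $\mathcal{L}$ directly to a generic basis polynomial $u_{l,m,n,a}(x,y,z) := x^l(1-x)^l y^m(1-y)^m z^n(1-2x)^a(1-2y)^a$ and verify term by term that each summand of $\mathcal{L}\,u_{l,m,n,a}$ belongs to the linear span of the same family. The computational backbone consists of the algebraic identity $(1-2x)^2 = 1-4x(1-x)$ and its $y$-analogue, together with the elementary derivative formulas $\partial_x[x^l(1-x)^l] = l\,x^{l-1}(1-x)^{l-1}(1-2x)$ and $\partial_x[x^l(1-x)^l(1-2x)] = l\,x^{l-1}(1-x)^{l-1} - (4l+2)\,x^l(1-x)^l$. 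Iterating them yields closed-form expressions for $\tfrac{x(1-x)}{2}\partial_x^2$ acting on $x^l(1-x)^l$ and on $x^l(1-x)^l(1-2x)$; in both cases the output is a linear combination of the same two types with $l$ replaced by $l-1$ or $l$, and mirror identities hold in~$y$.

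With these identities in hand I would treat the seven summands of $\mathcal{L}$ one at a time. The diffusion terms $\tfrac{x(1-x)}{2}\partial_x^2$ and $\tfrac{y(1-y)}{2}\partial_y^2$ preserve $n$ and $a$ and shift $l$ or $m$ by $-1$ or $0$. The mixed term $z\partial_x\partial_y$ increases $n$ by one and flips $a$: from $a=0$ the two first-order derivatives manufacture the factor $(1-2x)(1-2y)$, while from $a=1$ that factor is annihilated by the second identity above. The terms $z(1-2x)\partial_x\partial_z$ and $z(1-2y)\partial_y\partial_z$ preserve $a$, because the extra $(1-2x)$ in the coefficient either pairs with a freshly produced $(1-2x)$ via $(1-2x)^2 = 1-4x(1-x)$ (when $a=0$) or survives alone when the derivative of $x^l(1-x)^l(1-2x)$ loses its own $(1-2x)$ factor (when $a=1$). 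The drift $-z(1+\rho/2)\partial_z$ acts as multiplication by $-n(1+\rho/2)$. Finally, the second-order term in $z$ splits into three pieces whose effects are $(l,m,n)\mapsto(l+1,m+1,n-2)$ with $a$ fixed; $(n,a)\mapsto(n-1,1-a)$, where for $a=1$ one expands $(1-2x)^2(1-2y)^2 = (1-4x(1-x))(1-4y(1-y))$ to land back in $a'=0$; and a scalar multiple of the input monomial.

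The only subtle point is to guarantee that the boundary constraint ``$l,m>0$ when $n=0$'' is never violated. Monomials with $n'=0$ can arise only from the diffusion pieces (which preserve $n$) or from the $\tfrac12 xy(1-x)(1-y)\partial_z^2$ piece (which requires $n\ge 2$). In the first case the coefficient of the would-be $x^{l-1}(1-x)^{l-1}$ output of $\tfrac{x(1-x)}{2}\partial_x^2$ is proportional to $l(l-1)$, so when $l=1$ this coefficient vanishes and no $(l',m',n',a')=(0,m,0,a)$ term is generated from an input with $n=0$; the $y$ case is identical. In the second case the output is explicitly multiplied by $x(1-x)y(1-y)$, so the new exponents $(l',m')=(l+1,m+1)$ are both at least one. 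Hence every produced monomial satisfies the restriction.

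The main obstacle is the bookkeeping: one must track how the factors $x(1-x)$, $(1-2x)$ and their $y$-counterparts migrate between the original monomial and the coefficients of $\mathcal{L}$, and verify that the expansion of $(1-2x)^2(1-2y)^2$ and of products such as $(1-2x)\cdot\partial_x[x^l(1-x)^l]$ always recombines into the stated family. Once the derivative identities recorded in the first paragraph are in place, however, each step is routine polynomial algebra, and closure of the manifold under $\mathcal{L}$ follows.
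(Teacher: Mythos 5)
Your verification is correct: the derivative identities $\partial_x[x^l(1-x)^l]=l\,x^{l-1}(1-x)^{l-1}(1-2x)$ and $\partial_x[x^l(1-x)^l(1-2x)]=l\,x^{l-1}(1-x)^{l-1}-(4l+2)x^l(1-x)^l$, together with $(1-2x)^2=1-4x(1-x)$, do carry each of the seven summands of $\mathcal{L}$ back into the family, and your observation that the only routes to an output with $n'=0$ are the marginal diffusion terms (whose $x^{l-1}(1-x)^{l-1}$ coefficient $\tfrac{l(l-1)}{2}$ vanishes at $l=1$) and the $\tfrac12 xy(1-x)(1-y)\partial_z^2$ term (which raises $l$ and $m$) correctly disposes of the boundary restriction. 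The paper states this lemma without proof, quoting Littler's thesis, so there is no argument to compare against; a direct term-by-term computation such as yours is the natural and adequate way to establish it.
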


\begin{remark}
The polynomials have zero on the boundary of the square $x(1-x)y(1-y)=0$ and
$z=0$. Known closed-form expressions for the moments of polynomials of this 
type are $(l,m,n,a)=(1,1,0,0)$, $(0,0,2,0)$, and $(0,0,0,1)$ by 
\cite{OhtaKimura1969}. Expressions for $(2,1,0,0)$, $(1,2,0,0)$, $(2,0,1,1)$, 
$(0,2,1,1)$, $(2,0,2,0)$, $(0,2,2,0)$, $(2,2,0,0)$, $(1,1,1,1)$, $(1,1,2,0)$,
$(0,0,3,1)$, and $(0,0,4,0)$ are given by \cite{Littler1972},
where the expressions involve eigenvalues whose closed-form expressions are 
not available.
\end{remark}

%
%

In \cite{Mano2005} a closed-form expression of 
${\mathbb E}[x_1(t)|x(t)\in(0,1)]$ was obtained by using a limit of 
a closed-form expression of a class-one moment. It yields an expression of 
the conditional covariance between $x$ and $y$ given that alleles $A_1$ and 
$A_2$ are segregating in the locus $A$. This expression has importance in 
interpreting observable polymorphism in population genetic data analysis. 
Here, we summarize some results in \cite{Mano2005}, because the explicit 
expressions are used in the later sections. If the argument equals to unity, 
the truncated hypergeometric series
\begin{equation*}
y_n(a,b;c;z)=\sum_{i=0}^{n}\frac{(a)_i(b)_i}{(c)_ii!}z^i
\end{equation*}
is expressed by the generalized hypergeometric series~\cite{Bailey1935}
\begin{equation*}
y_n(a,b,c;1)
=\frac{\Gamma(a+n+1)\Gamma(b+n+1)}{n!\Gamma(a+b+n+1)}
{}_3 F_2(a,b,c+n;c,a+b+n+1;1).
\end{equation*}
where ${}_3 F_2(\cdot)$ is the generalized hypergeometric series. A trivial 
but useful identity is
\begin{lemma}[\cite{Mano2005}]
For $m,n\in {\mathbb Z}_+$ and $a,b,c\in {\mathbb C}$,
\begin{eqnarray}
&&\frac{n!\Gamma(a+b+n+1)}{\Gamma(a+n+1)\Gamma(b+n+1)}
y_n(a,b;a+b+m+1;1)\nonumber\\
&&=\frac{m!\Gamma(a+b+m+1)}{\Gamma(a+m+1)\Gamma(b+m+1)}
y_{m}(a,b;a+b+n+1;1).\label{trahyp_id}
\end{eqnarray}
\end{lemma}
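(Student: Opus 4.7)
The plan is to apply the given ${}_3F_2$ representation of the truncated hypergeometric series $y_n(a,b;c;1)$ to both sides and observe that each collapses to the same generalized hypergeometric series, up to a harmless permutation of its parameters.

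First, I would specialize the stated formula
\begin{equation*}
y_n(a,b;c;1) = \frac{\Gamma(a+n+1)\Gamma(b+n+1)}{n!\,\Gamma(a+b+n+1)}\,{}_3F_2(a,b,c+n;c,a+b+n+1;1)
\end{equation*}
to the choice $c=a+b+m+1$ appearing on the left-hand side. Multiplying through by the prefactor $n!\,\Gamma(a+b+n+1)/[\Gamma(a+n+1)\Gamma(b+n+1)]$ cancels the Gamma quotient exactly, so the left-hand side reduces to
\begin{equation*}
{}_3F_2(a,b,a+b+m+n+1;\,a+b+m+1,\,a+b+n+1;\,1).
\end{equation*}

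Next I would carry out the symmetric computation on the right-hand side, applying the same representation with $c=a+b+n+1$ and multiplying by $m!\,\Gamma(a+b+m+1)/[\Gamma(a+m+1)\Gamma(b+m+1)]$. The Gamma factors again cancel and the right-hand side becomes
\begin{equation*}
{}_3F_2(a,b,a+b+m+n+1;\,a+b+n+1,\,a+b+m+1;\,1).
\end{equation*}

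Finally, since the generalized hypergeometric series ${}_3F_2$ is manifestly invariant under permutation of its lower (denominator) parameters, these two expressions coincide, which establishes the identity. There is no real obstacle here: the lemma is a direct rewriting, and the only point to check is that the prefactors on each side match the Gamma quotient appearing in the ${}_3F_2$ representation, which they do by design. The identity (\ref{trahyp_id}) is therefore little more than the symmetry $(c_1,c_2)\leftrightarrow(c_2,c_1)$ of the denominator parameters of ${}_3F_2$, transported back through the formula relating $y_n$ to ${}_3F_2$.
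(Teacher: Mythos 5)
Your proof is correct and is exactly the argument the paper intends: it states the ${}_3F_2$ representation of $y_n(a,b;c;1)$ immediately before the lemma and calls the identity ``trivial'' precisely because both sides collapse to ${}_3F_2(a,b,a+b+m+n+1;a+b+m+1,a+b+n+1;1)$ up to the symmetry of the lower parameters. No gap; the prefactors cancel the Gamma quotients as you say.
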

\begin{remark}
If $m=0$, it gives an identity
\begin{eqnarray*}
{}_2F_1(a,b;a+b+1;1)&=&{}_3F_2(a,b,a+b+n+1;a+b+1,a+b+n+1;1)\\
&=&\frac{n!\Gamma(a+b+n+1)}{\Gamma(a+n+1)\Gamma(b+n+1)}y_n(a,b,a+b+1;1)\\
&=&\frac{\Gamma(a+b+1)}{\Gamma(a+1)\Gamma(b+1)},
\end{eqnarray*}
which is a spacial case of the Gauss hypergeometric theorem~\cite{Bailey1935}.
\end{remark}

An expression of a power of $p$ in terms of the Gegenbauer polynomial follows
by the orthogonal and complete property \cite{Mano2005}:
\begin{equation}
p^n=\sum_{m=2}^{n+2}2(2m-1)\frac{[n+1]_{m-1}}{(n+1)_{m+1}}
(-1)^mT^1_{m-2}(1-2p),\qquad n\in {\mathbb Z}_+,
\label{pn_id}
\end{equation}
where $T^1_m(\cdot)$ is the Gegenbauer polynomial, which also denoted by 
$C^{(\frac{3}{2})}_1(\cdot)$. $[n]_m$ and $(n)_m$ are falling and rising 
factorials, respectively. By using this expression closed form solutions of
systems of differential equations for class-one moments were obtained. Let 
$\mu_{l,m,n}(t)={\mathbb E}_{pqd}[x(t)^ly(t)^mz(t)^n]$ for 
$l,m,n\in {\mathbb Z}_+$.

\begin{proposition}[(\cite{Mano2005})]
For $n\in {\mathbb Z}_+$,
\begin{equation*}
\mu_{n,0,1}(t)=\sum_{m=2}^{n+2}2(2m-1)\frac{[n+1]_{m-1}}{(n+1)_{m+1}}
(-1)^mT^1_{m-2}(1-2p)de^{-\frac{m(m-1)+\rho}{2}t}.
\end{equation*}
\end{proposition}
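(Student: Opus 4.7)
The plan is to reduce the proposition to a triangular linear ODE system whose spectrum can be read off the generator, and then match initial data using the Gegenbauer expansion (\ref{pn_id}).

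First, I would apply the generator (\ref{generator}) to the monomial $f(x,y,z)=x^{n}z$. Every term involving $\partial_y$, $\partial_y\partial_z$, or $\partial_z^2$ annihilates $f$, and a short computation leaves
\begin{equation*}
\mathcal{L}(x^{n}z)=\frac{n(n+1)}{2}x^{n-1}z-\frac{(n+1)(n+2)+\rho}{2}x^{n}z.
\end{equation*}
Setting $\lambda_m:=\{m(m-1)+\rho\}/2$, the Kolmogorov backward equation for $\mu_{n,0,1}$ is therefore the lower-triangular system
\begin{equation*}
\frac{d}{dt}\mu_{n,0,1}(t)=\frac{n(n+1)}{2}\mu_{n-1,0,1}(t)-\lambda_{n+2}\mu_{n,0,1}(t),\qquad\mu_{n,0,1}(0)=p^{n}d.
\end{equation*}

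Next I would try the ansatz $\mu_{n,0,1}(t)=\sum_{m=2}^{n+2}c_{n,m}e^{-\lambda_m t}$, which is natural because the diagonal of the system runs through precisely the eigenvalues $\lambda_{n+2}$. Substituting and matching the coefficient of $e^{-\lambda_m t}$, combined with the factorization $(n+1)(n+2)-m(m-1)=(n-m+2)(n+m+1)$, turns the ODE into the explicit recursion
\begin{equation*}
c_{n,m}=\frac{n(n+1)}{(n-m+2)(n+m+1)}c_{n-1,m}\qquad(m\le n+1),
\end{equation*}
with $c_{m-2,m}$ a free constant for each $m\ge 2$. Telescoping the product in $n$ then expresses $c_{n,m}$ as a ratio of falling and rising factorials times $c_{m-2,m}$.

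Finally, the constants $c_{m-2,m}$ are fixed by the initial condition $\sum_{m=2}^{n+2}c_{n,m}=p^{n}d$, which must hold for every $n$ simultaneously. Since $\{T^{1}_{m-2}(1-2p)\}_{m\ge 2}$ is a basis of polynomials in $p$, the identity (\ref{pn_id}) provides the unique expansion of $p^{n}$ in this basis, and coefficient comparison forces
\begin{equation*}
c_{n,m}=2(2m-1)\frac{[n+1]_{m-1}}{(n+1)_{m+1}}(-1)^{m}T^{1}_{m-2}(1-2p)\,d,
\end{equation*}
which is the stated formula.

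The only mildly delicate point is verifying that the closed form dictated by (\ref{pn_id}) actually satisfies the recursion produced by $\mathcal{L}$, i.e.\ checking the elementary factorial identity $[n+1]_{m-1}/(n+1)_{m+1}=n(n+1)/\{(n-m+2)(n+m+1)\}\cdot[n]_{m-1}/(n)_{m+1}$. Once this one-step consistency is in hand, the ansatz simultaneously solves the triangular ODE and matches the initial data, and the proof is complete; everything else is routine bookkeeping.
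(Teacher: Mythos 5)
Your argument is correct, and it is essentially the method the paper itself indicates: the paper states this proposition without proof (citing Mano 2005), but the surrounding text explicitly says the class-one moments are obtained by solving the triangular system of differential equations and matching initial data via the Gegenbauer expansion (\ref{pn_id}), which is exactly what you do, and it mirrors the paper's proof of the companion result for $\mu_{n,1,0}$. Your computation of $\mathcal{L}(x^n z)$, the factorization $(n+1)(n+2)-m(m-1)=(n-m+2)(n+m+1)$, and the one-step factorial consistency check are all correct, so nothing is missing.
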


\begin{proposition}[\cite{Mano2005}]\label{mu_n10}
For $n\in{\mathbb Z}_+$, 
\begin{equation*}
\mu_{n,1,0}(t)=pq+\frac{2d}{2+\rho}
+\sum_{m=1}^{n-1}E^{(m)}_ne^{-\frac{m(m+1)}{2}t}
+\sum_{m=1}^nF^{(m)}_ne^{-\frac{\rho+m(m+1)}{2}t}
\end{equation*}
except for $\rho=(k+m)(k-m-1),k=m+2,m+3,...,n;m=1,2,...,n$, where
\begin{eqnarray*}
E^{(m)}_n&=&
(-1)^m\frac{[n]_{m+1}}{(n)_{m+1}}
\left[
\frac{2(2m+1)}{m(m+1)}p(1-p)qT^1_{m-1}(1-2p)\right.\nonumber\\
&&\left. +2\left\{
\frac{T^1_m(1-2p)}{2(m+1)+\rho}+\frac{T^1_{m-2}(1-2p)}{2m-\rho}
\right\}d\right],
\end{eqnarray*}
and
\begin{equation}
F^{(m)}_n=
2(-1)^m\frac{[n]_m}{(n)_m}
\left\{\frac{1}{2m+\rho}+\frac{1}{2(m+1)-\rho}\frac{(n-m)(n-m-1)}{(n+m)(n+m+1)}
\right\}T^1_{m-1}(1-2p)d,\label{solF}
\end{equation}
with conventions that the first sum is zero if $n=1$ and $T_{-1}^1(\cdot)=0$.
\end{proposition}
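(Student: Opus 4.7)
The plan is to derive a first-order linear ODE for $\mu_{n,1,0}(t)$ by applying the generator $\mathcal{L}$ to the monomial $x^n y$, and then to solve it inductively in $n$ using the closed form of $\mu_{n-1,0,1}$ from the preceding proposition. A direct computation, noting that every $\partial/\partial z$ contribution and the $\partial^2/\partial y^2$ contribution kill $x^n y$, gives
\begin{equation*}
\mathcal{L}(x^n y) = \frac{n(n-1)}{2}\bigl(x^{n-1}y - x^n y\bigr) + n\, x^{n-1} z,
\end{equation*}
so that, with $\mu_{n,1,0}(0) = p^n q$,
\begin{equation*}
\mu_{n,1,0}'(t) + \frac{n(n-1)}{2}\mu_{n,1,0}(t) = \frac{n(n-1)}{2}\mu_{n-1,1,0}(t) + n\,\mu_{n-1,0,1}(t).
\end{equation*}
The base $n=1$ is an elementary integration recovering the Ohta--Kimura formula $pq + \tfrac{2d}{2+\rho}\bigl(1 - e^{-(2+\rho)t/2}\bigr)$.

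For the inductive step, combining the hypothesis for $\mu_{n-1,1,0}$ with the closed form of $\mu_{n-1,0,1}$ expresses the right-hand side as a finite linear combination of a constant and exponentials at rates $k(k+1)/2$ for $k=1,\ldots,n-2$ and $(k(k+1)+\rho)/2$ for $k=1,\ldots,n$. Multiplying by the integrating factor $e^{n(n-1)t/2}$ and integrating term-by-term, each nonresonant mode $e^{-\lambda t}$ contributes a particular part proportional to $(n(n-1)/2 - \lambda)^{-1}e^{-\lambda t}$, and the homogeneous part $C e^{-n(n-1)t/2}$ is fixed by the initial condition. Resonance $n(n-1)/2 = (k(k+1)+\rho)/2$ is equivalent to $\rho = (n+k)(n-k-1)$, which is precisely the excluded set in the statement.

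The coefficient of $e^{-(\rho + m(m+1))t/2}$ then splits naturally into the two pieces of $F_n^{(m)}$ in (\ref{solF}): the term $\tfrac{1}{2m+\rho}$ arises from direct forcing by $\mu_{n-1,0,1}$, while the term $\tfrac{1}{2(m+1)-\rho}\tfrac{(n-m)(n-m-1)}{(n+m)(n+m+1)}$ arises by propagation of the corresponding term in $\mu_{n-1,1,0}$, the factorial ratio coming from elementary simplification of $[n-1]_m/(n-1)_m$ against $[n]_m/(n)_m$ multiplied by the resolvent $n(n-1)/(n(n-1) - (\rho + m(m+1)))$. The coefficient $E_n^{(m)}$ is assembled from the homogeneous constant $C$, whose determination uses the Gegenbauer expansion (\ref{pn_id}) of $p^n$ to rewrite $\mu_{n,1,0}(0) = p^n q$ in the same $T^1_{m-1}(1-2p)$ basis in which the particular solutions are expressed.

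The main obstacle is the bookkeeping: one must show that the iterated induction telescopes so that a single ratio $[n]_{m+1}/(n)_{m+1}$ appears in $E_n^{(m)}$, rather than a nested sum inherited from the recursion. Here Lemma \ref{trahyp_id}, or equivalent rising/falling factorial identities, is the key tool that exposes the telescoping. Once the coefficients are verified to match at a single step of the induction, consistency at all subsequent steps follows by the same manipulations, and the stated formula is established.
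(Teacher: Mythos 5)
Your proposal is correct and follows essentially the same route as the paper: the ODE you obtain from $\mathcal{L}(x^ny)$ is exactly the moment system the paper invokes, the exponential ansatz yields the same recursions for $E_n^{(m)}$ and the resolvent recursion $\{(n+m)(n-m-1)-\rho\}F_n^{(m)}=n(n-1)F_{n-1}^{(m)}+(\text{forcing from }\mu_{n-1,0,1})$, the resonance condition reproduces the excluded set $\rho=(k+m)(k-m-1)$, and the initial condition is handled via (\ref{pn_id}) together with the identity (\ref{trahyp_id}), just as in the paper. The only quibble is one of attribution: the nested sums that (\ref{trahyp_id}) must collapse arise chiefly in the inhomogeneous $F$-recursion and in verifying the $t=0$ condition, rather than in the purely homogeneous propagation of $E_n^{(m)}$, but this does not affect the argument.
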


\begin{proof}
A sketch of the proof was given in \cite{Mano2005}. A system of differential 
equations for the moments gives
\begin{equation*}
E^{(m)}_n=\frac{[n]_{m+1}!(2m+1)!}{(n)_{m+1}(m+1)!m!}E^{(m)}_{m+1},
\qquad n\in{\mathbb Z}_+;\, m=1,2,...,n-1,
\end{equation*}
and 
\begin{eqnarray}
&&\left\{(n+m)(n-m-1)-\rho\right\}F^{(m)}_n=
4n(2m+1)\frac{[n-1]_{m-1}}{(n+1)_{m+1}}(-1)^{m+1}T^1_{m-1}(1-2p)d\nonumber\\
&&+n(n-1)F^{(m)}_{n-1},\qquad n\in{\mathbb Z}_+;\, m=1,2,...,n,
\label{eqF}
\end{eqnarray}
with the initial condition
\begin{equation*}
p^n q=pq+\frac{2d}{2+\rho}+\sum_{m=1}^{n-1}E^{(m)}_n+\sum^n_{m=1}F^{(m)}_n,
\qquad n\in {\mathbb Z}_+.
\end{equation*}
By using (\ref{trahyp_id}) with $m=1,2$, it is straightforward to solve 
(\ref{eqF}) for $F^{(m)}_n$ and the solution is (\ref{solF}), except for 
$\rho=(k+m)(k-m-1),k=m+2,m+3,...,n;m=1,2,...,n$ (The exceptional values given 
in \cite{Mano2005} are incorrect). By setting
$E_n^{(m)}=qE_{n,1}^{(m)}(p)+dE_{n,2}^{(m)}(p)$, the initial condition gives
\begin{eqnarray*}
\sum_{m=1}^{n-1}E_{n,1}^{(m)}(p)
&=&p(p-1)\sum_{i=0}^{n-2}p^i\\
&=&p(p-1)
\sum_{i=0}^{n-2}
\sum_{m=2}^{i+2}2(2m-1)\frac{[i+1]_{m-1}}{(i+1)_{m+1}}
(-1)^mT^1_{m-2}(1-2p)\\
&=&p(p-1)\sum_{m=1}^{n-1}
2(-1)^{m+1}\frac{m!(m-1)!}{(2m)!}\\
&&\times y_{n-m-1}(m+1,m,2m+2;1)
T^1_{m-1}(1-2p)\\
&=&p(1-p)\sum_{m=1}^{n-1}
(-1)^{m}\frac{[n]_{m+1}}{(n)_{m+1}}\frac{2(2m+1)}{m(m+1)}T^1_{m-1}(1-2p),
\end{eqnarray*}
for each $n=2,3,...$, where the second equality holds by (\ref{pn_id}) and 
the last equality holds by (\ref{trahyp_id}) with $m=0$. The expression of
summation of $E_{n,2}^{(m)}(p)$ follows by re-arrangement of terms.
\end{proof}

\section{The duality and a numerical method for computing moments}

%
%

The process of the numbers of lineages including non-ancestral lineages (see
below) in a section of a two-locus ARG is a birth and death 
process \cite{Griffiths1991}. When there are $i$ lineages, the birth rate is 
$i\rho/2$ and the death rate is $i(i-1)/2$. The birth and death process is 
identical to the numbers of ancestral lineages in a section of an ancestral 
selection graph, and the moment dual of the birth and death process is 
the Wright-Fisher diffusion of the one-locus two-allele model with directional
selection \cite{Mano2009a}. Note that, an ARG involves gametes whose alleles 
are not ancestral to any allele in a sample. We denote alleles which are not
ancestral to any allele in the sample by $-$, since in principle the allelic 
state of the locus cannot be specified by the sample. For example, a gamete 
$AB$ could be a recombinant descendant of $A-$, which in turn could be 
a recombinant descendant of $--$. The gamete $--$ is involved in the ARG, 
nevertheless, whose alleles are not ancestral to any allele in the sample. 
In this paper, we discuss the ancestral process which generates numbers of 
ancestral lineages in a section of an ARG. The ancestral lineages are a subset
of lineages in an ARG. The stationary distribution of the process with 
infinitely-many-allele mutation model was studied 
by \cite{EthierGriffiths1990b}. A whole graph ${\mathcal G}$ includes marginal 
genealogies ${\mathcal T}_A$ and ${\mathcal T}_B$ at the locus $A$ and $B$, 
respectively. Denote the edges of a graph by $E(\cdot)$. Edges of ${\mathcal G}$ 
are partitioned into 
${\mathcal A}=E({\mathcal G})\cap E({\mathcal T}_A)\cap E({\mathcal T}_B)^c$,
${\mathcal B}=E({\mathcal G})\cap E({\mathcal T}_A)^c\cap E({\mathcal T}_B)$,
${\mathcal C}=E({\mathcal G})\cap E({\mathcal T}_A)\cap E({\mathcal T}_B)$ and
${\mathcal D}=E({\mathcal G})\cap E({\mathcal T}_A)^c\cap E({\mathcal T}_B)^c$.
We call ${\mathcal A}$, ${\mathcal B}$ and ${\mathcal C}$ ancestral lineages. ${\mathcal D}$
are not ancestral lineages since they are not ancestral to any allele in 
a sample. Let ${\mathcal E}_t$ be the edges of a section of $\mathcal G$ taken at 
time $t$ backwards. Denote the number of ancestral lineages by 
$a(t)=|{\mathcal E}_t\cap {\mathcal A}|$, $b(t)=|{\mathcal E}_t\cap {\mathcal B}|$, 
$c(t)=|{\mathcal E}_t\cap {\mathcal C}|$. The marginal transition rates of 
$(a(t),b(t),c(t))$ do not depends on $|{\mathcal E}_t\cap {\mathcal D}|$ and is 
Markovian \cite{EthierGriffiths1990a,Griffiths1991}. The rates are: 
\begin{eqnarray}
(a,b,c)\rightarrow
\left\{
\begin{array}{ll}
(a+1,b+1,c-1), &c\rho/2,\\
(a-1,b-1,c+1), &ab,\\
(a-1,b,c),     &ac+a(a-1)/2,\\
(a,b-1,c),     &bc+b(b-1)/2,\\
(a,b,c-1),     &c(c-1)/2,
\end{array}
\right.
\label{transition}
\end{eqnarray}
with $a+b+c>1$. The backward equation for the joint probability generating 
function $\xi_{l,m,n}(t)={\mathbb E}_{l,m,n}[p^{a(t)}q^{b(t)}g^{c(t)}]$ of 
the Markov chain $\{a(t),b(t),c(t); t\ge 0\}$ on the integer lattice 
${\mathbb Z}_+^3\backslash {\mathbf 0}$ is
\begin{eqnarray}
\frac{d\xi_{l,m,n}}{dt}&=&
-\frac{(l+m+n)(l+m+n-1)+n\rho}{2}\xi_{l,m,n}+\frac{n\rho}{2}\xi_{l+1,m+1,n-1}
\nonumber\\
&&+\frac{l(l-1+2n)}{2}\xi_{l-1,m,n}+\frac{m(m-1+2n)}{2}\xi_{l,m-1,n}
+\frac{n(n-1)}{2}\xi_{l,m,n-1}
\nonumber\\
&&
+lm\xi_{l-1,m-1,n+1},
\label{rec}
\end{eqnarray}
for $(l,m,n)\in{\mathbb Z}_+^3\backslash {\mathbf 0}$, where terms whose 
subscripts have negative integer are zero. It is straightforward to see that
the moments $\nu_{l,m,n}(t)={\mathbb E}_{p,q,g}[x(t)^ly(t)^mx_1(t)^n]$ also 
satisfy the system of the differential equations (\ref{rec}). Therefore 
a moment duality follows immediately \cite{EthierGriffiths1990a}. We give 
a proof, since it is useful to introduce a numerical method for computing 
moments.

\begin{lemma}\label{duality}
The diffusion process $\{x(t),y(t),z(t);t\ge 0\}$ in $H$ with 
$(x(0),y(0),x_1(0))=(p,q,g)$ and the Markov chain $\{a(t),b(t),c(t); t\ge 0\}$
in ${\mathbb Z}_+^3\backslash{\mathbf 0}$ whose transition rates are 
(\ref{transition}) with $(a(0),b(0),c(0))=(l,m,n)$ are dual to each 
other:
\begin{equation*}
{\mathbb E}_{p,q,g}[{x(t)}^l{y(t)}^m{x_1(t)}^n]=
{\mathbb E}_{l,m,n}[p^{a(t)}q^{b(t)}g^{c(t)}].
\end{equation*}
\end{lemma}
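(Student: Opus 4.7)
The plan is to show that both sides of the claimed identity, as functions of $t$ with fixed integer indices $(l,m,n)$ and fixed initial data $(p,q,g)$, satisfy the same infinite linear ODE system (\ref{rec}) with common initial value $p^l q^m g^n$, and then to invoke uniqueness. The right-hand side $\xi_{l,m,n}(t)$ has this property essentially by construction: it is the probability generating function of the jump chain with rates (\ref{transition}), so Kolmogorov's backward equation delivers (\ref{rec}) (the six terms on the right correspond to the five jump types together with the total leaving rate $c\rho/2+ab+ac+bc+[a(a-1)+b(b-1)+c(c-1)]/2=c\rho/2+(a+b+c)(a+b+c-1)/2$), and $\xi_{l,m,n}(0)=p^l q^m g^n$ is immediate.

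For the left-hand side, write the test function in the diffusion coordinates as $f_{l,m,n}(x,y,z)=x^l y^m (xy+z)^n$, which equals $x^l y^m x_1^n$ on $H$ since $x_1=xy+z$. Dynkin's formula gives
\begin{equation*}
\frac{d\nu_{l,m,n}}{dt} = \mathbb{E}_{p,q,g}\bigl[(\mathcal{L}f_{l,m,n})(x(t),y(t),z(t))\bigr],
\end{equation*}
so what must be shown is the pointwise identity that $\mathcal{L}f_{l,m,n}$ equals the right-hand side of (\ref{rec}) with each $\xi_{l',m',n'}$ replaced by $f_{l',m',n'}$. To do this I would apply the generator (\ref{generator}) term by term: each partial derivative of $f_{l,m,n}$ acts on both the explicit $x^l y^m$ factor and on $(xy+z)^n$, so the raw output is a polynomial in $(x,y,z)$ with many cross-terms, but repeatedly using $xy+z=x_1$ to absorb spurious $xy$ and $z$ factors collapses everything into a finite linear combination of monomials $x^{l'} y^{m'} x_1^{n'}$. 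Heuristically, the pure second-derivative terms $\tfrac{1}{2}x(1-x)\partial_x^2$ and $\tfrac{1}{2}y(1-y)\partial_y^2$ will produce the $l(l-1+2n)/2$ and $m(m-1+2n)/2$ shifts, the $\partial_z^2$ and $-z(1+\rho/2)\partial_z$ terms combined with $z(1-2x)\partial_x\partial_z$ and $z(1-2y)\partial_y\partial_z$ will yield the $n(n-1)/2$ and $n\rho/2$ contributions, and the mixed $z\partial_x\partial_y$ term is what supplies the cross-shift $lm$ to the index $(l-1,m-1,n+1)$.

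Finally, uniqueness of solutions of (\ref{rec}) with prescribed initial data, among families $(u_{l,m,n}(t))$ bounded uniformly in $t$ and in the indices, completes the argument; both $\nu_{l,m,n}$ and $\xi_{l,m,n}$ are bounded by $1$ on $[0,1]^3$, and the uniqueness claim is standard for the backward equation of a non-explosive continuous-time jump chain. The main obstacle is the algebraic identity for $\mathcal{L}f_{l,m,n}$: the generator has nine terms in the $(x,y,z)$ coordinates, and $(xy+z)^n$ produces several cross-terms under each first derivative, so the bookkeeping is substantial. Once the reorganisation via $xy+z=x_1$ is carried out, however, the six coefficients of (\ref{rec}) fall out with the signs and combinatorial factors as stated, and the proof is complete.
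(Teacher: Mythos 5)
Your proposal is correct and follows essentially the same route as the paper: both sides are shown to satisfy the linear system (\ref{rec}) with common initial datum $p^lq^mg^n$ (the chain side via the Kolmogorov backward equation, which the paper phrases equivalently as an integro-recurrence equation handled by the strong Markov property; the diffusion side via It\^o/Dynkin applied to $x^ly^m(xy+z)^n$ using $x_1=xy+z$), and equality follows from uniqueness, which is unproblematic since the recombination term preserves the rank $l+m+2n$ and all other terms lower it, so the system is effectively finite. The paper likewise dismisses the generator computation as ``straightforward by It\^o's formula,'' so your deferral of that bookkeeping matches the level of detail of the original proof.
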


\begin{proof}
The system of the differential equations (\ref{rec}) is equivalent to 
an integro-recurrence equation
\begin{equation*}
\xi_{l,m,n}(t)=
\int^t_0{\mathcal T}\xi_{l,m,n}(s)e^{-\gamma_{l,m,n}(t-s)}ds,
\qquad l,m,n\in {\mathbb Z}_+,
\end{equation*}
where 
\begin{eqnarray}
{\mathcal T}\xi_{l,m,n}&=&
\frac{n\rho}{2}\xi_{l+1,m+1,n-1}+\frac{l(l-1+2n)}{2}\xi_{l-1,m,n}+
\frac{m(m-1+2n)}{2}\xi_{l,m-1,n}\nonumber\\
&&+\frac{n(n-1)}{2}\xi_{l,m,n-1}+lm\xi_{l-1,m-1,n+1},
\label{defT}
\end{eqnarray}
and $\gamma_{l,m,n}=((l+m+n)(l+m+n-1)+n\rho)/2$. The integro-recurrence 
equation is recast into
\begin{equation}\label{integrec}
\xi_{l,m,n}(t)=\int^t_0\sum_{l',m',n'}
{\mathbb P}[l'm'n'|l,m,n]\xi_{l',m',n'}(t-s)
\gamma_{l,m,n}e^{-\gamma_{l,m,n}s}ds,
\end{equation}
where the transition probability is given by dividing the rates in 
(\ref{transition}) by $\gamma_{a,b,c}$. Meanwhile
\begin{eqnarray*}
{\mathbb E}_{l,m,n}[p^{a(t)}q^{b(t)}g^{c(t)}]&=&
{\mathbb E}_{l,m,n}\left\{
{\mathbb E}\left[p^{a(t)}q^{b(t)}g^{c(t)}|(a(s),b(s),c(s))=(l',m',n')\right]
\right\}\nonumber\\
&=&
{\mathbb E}_{l,m,n}\left\{
{\mathbb E}_{l',m',n'}
\left[p^{a(t-s)}q^{b(t-s)}g^{c(t-s)}\right]
\right\}\nonumber\\
&=&
{\mathbb E}_{l,m,n}[\xi_{l',m',n'}(t-s)],
\end{eqnarray*}
\end{proof}
where the second equality follows by the strong Markov property. This
expression is equivalent to (\ref{integrec}). Therefore 
$\xi_{l,m,n}(t)={\mathbb E}_{l,m,n}[p^{a(t)}q^{b(t)}g^{c(t)}]$. On the other
hand, it is straightforward by It\^{o}'s formula to see that 
$\nu_{l,m,n}(t)={\mathbb E}_{p,q,g}[x(t)^ly(t)^mx(t)^n]$ satisfy the system of
differential equations (\ref{rec}).

The duality relation is useful for numerical computation of the moments
$\nu_{l,m,n}(t)$ by simulating independent copies of $(a(t),b(t),c(t))$
by the Markov chain Monte Carlo with the transition probabilities 
(\ref{transition}). Consider the simulations stopped at a time $t$. 
The average over $p^{a(t)}q^{b(t)}g^{c(t)}$ of the copies is then an unbiased
estimator of the moment $\nu_{l,m,n}(t)$. A similar method was used for 
computing likelihood of a sample in a varying 
environment \cite{GriffithsTavare1994}. The simulation can be stopped before 
$t$. Let a hitting time be
\begin{equation*}
\tau=\inf\left\{s\ge 0; (a(s),b(s),c(s))\in {\mathcal S}\right\},
\end{equation*}
where ${\mathcal S}=\{(0,0,1),(1,1,0)\}$ is the closed set of states from which
a chain cannot exit. If $\tau<t$,
\begin{eqnarray*}
{\mathbb E}_{l,m,n}[p^{a(t)}q^{b(t)}g^{c(t)}]
&=&{\mathbb E}_{l,m,n}\left\{{\mathbb E}[p^{a(t)}q^{b(t)}g^{c(t)}|
(a(\tau),b(\tau),c(\tau)=(l',m',n')]\right\}\\
&=&{\mathbb E}_{l,m,n}\left\{
{\mathbb E}_{l',m',n'}[p^{a(t-\tau)}q^{b(t-\tau)}g^{c(t-\tau)}]\right\}\\
&=&{\mathbb P}_{l,m,n}[(a(\tau),b(\tau),c(\tau))=(0,0,1)]\nu_{0,0,1}(t-\tau)\\
&&+{\mathbb P}_{l,m,n}[(a(\tau),b(\tau),c(\tau))=(1,1,0)]\nu_{1,1,0}(t-\tau),
\end{eqnarray*}
where the second equality follows by the strong Markov property, and
\begin{equation*}
\nu_{0,0,1}(t-\tau)
=g-\frac{\rho(g-pq)}{2+\rho}\left(1-e^{-\frac{2+\rho}{2}(t-\tau)}\right)
\end{equation*}
and 
\begin{equation*}
\nu_{1,1,0}(t-\tau)
=pq+\frac{2(g-pq)}{2+\rho}\left(1-e^{-\frac{2+\rho}{2}(t-\tau)}\right).
\end{equation*}
From these observations we have a numerical method for computing moments:

\begin{proposition}
Set a Markov time $\sigma=t\wedge\tau$, where $x\wedge y=\min\{x,y\}$. 
An unbiased estimator of $\nu_{l,m,n}(t)$ is the average over following values
obtained by independent copies of $(a(\sigma),b(\sigma),c(\sigma))$ simulated
by the Markov chain Monte Carlo with the transition probabilities 
(\ref{transition}): if $\sigma=t$, the value is 
$p^{a(\sigma)}q^{b(\sigma)}g^{c(\sigma)}$. If $\sigma=\tau$ and 
$(a(\sigma),b(\sigma),c(\sigma))=(0,0,1)$, the value is $\nu_{0,0,1}(t-\tau)$.
If $\sigma=\tau$ and $(a(\sigma),b(\sigma),c(\sigma))=(1,1,0)$, the value is
$\nu_{1,1,0}(t-\tau)$.
\end{proposition}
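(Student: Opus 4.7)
The plan is to combine the moment duality of Lemma \ref{duality} with the strong Markov property of the birth--death chain at the hitting time $\tau$, splitting the dual expectation according to whether $\sigma=t$ or $\sigma=\tau$. First I would write
\begin{equation*}
\nu_{l,m,n}(t)
=\mathbb{E}_{l,m,n}\!\left[p^{a(t)}q^{b(t)}g^{c(t)}\mathbf{1}_{\{\tau\ge t\}}\right]
+\mathbb{E}_{l,m,n}\!\left[p^{a(t)}q^{b(t)}g^{c(t)}\mathbf{1}_{\{\tau<t\}}\right],
\end{equation*}
using the duality identity from Lemma \ref{duality} applied to the starting state $(l,m,n)$.

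On the event $\{\tau\ge t\}$ we have $\sigma=t$, so the random variable produced by the algorithm is exactly $p^{a(t)}q^{b(t)}g^{c(t)}=p^{a(\sigma)}q^{b(\sigma)}g^{c(\sigma)}$, and the first term is already the expected value of the estimator on that event. For the second term, I would condition on $\mathcal{F}_\tau$: the strong Markov property of $\{(a(s),b(s),c(s));s\ge 0\}$ at $\tau$ gives
\begin{equation*}
\mathbb{E}_{l,m,n}\!\left[p^{a(t)}q^{b(t)}g^{c(t)}\,\bigl|\,\mathcal{F}_\tau\right]\mathbf{1}_{\{\tau<t\}}
=\mathbb{E}_{a(\tau),b(\tau),c(\tau)}\!\left[p^{a(t-\tau)}q^{b(t-\tau)}g^{c(t-\tau)}\right]\mathbf{1}_{\{\tau<t\}}.
\end{equation*}
Applying the duality of Lemma \ref{duality} again, now in the reverse direction to the new starting state $(a(\tau),b(\tau),c(\tau))$, this inner expectation equals $\nu_{a(\tau),b(\tau),c(\tau)}(t-\tau)$, which by the definition of $\tau$ and the fact that $\mathcal{S}=\{(0,0,1),(1,1,0)\}$ is a closed set under (\ref{transition}) is one of $\nu_{0,0,1}(t-\tau)$ or $\nu_{1,1,0}(t-\tau)$. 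Taking expectations and summing the two contributions yields
\begin{equation*}
\nu_{l,m,n}(t)=\mathbb{E}_{l,m,n}\!\left[V(\sigma)\right],
\end{equation*}
where $V(\sigma)$ is the quantity assigned by the algorithm, proving the unbiasedness.

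Two minor items I would verify in passing. First, that $\mathcal{S}$ really is closed under the rates (\ref{transition}): from $(1,1,0)$ the only nonzero rate is $ab=1$ into $(0,0,1)$, and from $(0,0,1)$ the only nonzero rate is $c\rho/2=\rho/2$ into $(1,1,0)$; all other moves are blocked by a zero factor. Second, that the displayed closed forms
\begin{eqnarray*}
\nu_{0,0,1}(s)&=&g-\frac{\rho(g-pq)}{2+\rho}\bigl(1-e^{-(2+\rho)s/2}\bigr),\\
\nu_{1,1,0}(s)&=&pq+\frac{2(g-pq)}{2+\rho}\bigl(1-e^{-(2+\rho)s/2}\bigr),
\end{eqnarray*}
solve the $2\times 2$ linear system obtained from (\ref{rec}) restricted to $\mathcal{S}$ with the appropriate initial conditions; this is a routine check.

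The only genuinely delicate point is the second application of the duality, where one invokes Lemma \ref{duality} at a random starting state $(a(\tau),b(\tau),c(\tau))$ and a random residual time $t-\tau$. The strong Markov property legitimates the first step, but one must also observe that $\tau$ is $\mathcal{F}_\tau$-measurable and that the post-$\tau$ chain is, conditionally on $\mathcal{F}_\tau$, distributed as a chain of the same law started afresh at the hit state; the duality identity then applies pointwise in the conditioning, which is the step I expect to write out most carefully.
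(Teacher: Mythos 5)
Your proposal is correct and follows essentially the same route as the paper: apply the duality of Lemma \ref{duality} to pass to the dual chain, then use the strong Markov property at $\tau$ on the event $\{\tau<t\}$ to replace the remaining evolution by $\nu_{0,0,1}(t-\tau)$ or $\nu_{1,1,0}(t-\tau)$, keeping $p^{a(t)}q^{b(t)}g^{c(t)}$ on $\{\tau\ge t\}$. Your indicator-function decomposition and the explicit check that ${\mathcal S}$ is closed under (\ref{transition}) merely spell out steps the paper leaves implicit.
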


\section{Closed-form expressions of moments}

%
%

Since the system of differential equations for the moments of the distribution
generated by the two-locus two-allele Wright-Fisher diffusion model 
(\ref{rec}) is that for the joint probability generating function of 
the distribution of the numbers of ancestral lineages in a section of 
a two-locus ARG, relationship among moments can be specified in terms of 
events on the ARG. In (\ref{defT}), the first event is a recombination, 
the second and the third events are marginal coalescence in ${\mathcal T}_A$ and
${\mathcal T}_B$, respectively, and the forth event is a joint coalescence in both
of ${\mathcal T}_A$ and ${\mathcal T}_B$. We call the fifth event as a null 
coalescence, because coalescence events occur in neither ${\mathcal T}_A$ nor 
${\mathcal T}_B$. The following lemma is obvious.
 
\begin{lemma}
The manifold of moments spanned by the set of moments whose ranks and classes
are equals to or smaller than specified values is closed under the operation
of ${\mathcal T}$. Neither of class and rank of a moment change under 
recombination and null coalescence operations.
\end{lemma}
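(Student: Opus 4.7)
The plan is a direct verification: I would inspect each of the five terms in the decomposition (\ref{defT}) of ${\mathcal T}$ separately, track how the index triple $(l,m,n)$ is shifted, and compute the induced change in the rank $l+m+2n$ and the class $n+\min\{l,m\}$. Since both functionals are additive in the indices, the whole argument is essentially a bookkeeping exercise organized around the interpretation of each term as an ARG event.

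For the recombination term $\frac{n\rho}{2}\xi_{l+1,m+1,n-1}$, the shift $(l,m,n)\mapsto(l+1,m+1,n-1)$ changes the rank by $1+1-2=0$, and since $\min\{l+1,m+1\}=\min\{l,m\}+1$, it changes the class by $-1+1=0$. For the null coalescence term $lm\,\xi_{l-1,m-1,n+1}$, the shift $(l,m,n)\mapsto(l-1,m-1,n+1)$ changes the rank by $-1-1+2=0$ and the class by $1-1=0$, using $\min\{l-1,m-1\}=\min\{l,m\}-1$; this identity is applied only when $l,m\ge 1$, which is exactly when the coefficient $lm$ does not vanish. This part alone proves the second sentence of the lemma.

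For the two marginal coalescence terms $\frac{l(l-1+2n)}{2}\xi_{l-1,m,n}$ and $\frac{m(m-1+2n)}{2}\xi_{l,m-1,n}$, the rank strictly decreases by one, and a brief case split on whether $l\le m$ or $l>m$ shows that $\min\{l-1,m\}$ either equals $\min\{l,m\}-1$ or $\min\{l,m\}$, so the class weakly decreases; the symmetric statement handles coalescence in ${\mathcal T}_B$. For the joint coalescence term $\frac{n(n-1)}{2}\xi_{l,m,n-1}$, the rank decreases by two and the class decreases by one.

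Since none of the five operations can increase either the rank or the class, the subspace of moments with rank $\le R$ and class $\le K$ is carried into itself by ${\mathcal T}$ for any specified $R,K$, which is the first assertion. The only point that needs a moment of thought is the class computation for a single-index marginal coalescence, where the monotonicity $\min\{l-1,m\}\le\min\{l,m\}$ must be combined with the observation that equality holds precisely when $l$ was not the minimizer; beyond this the argument is a tautology from the definitions.
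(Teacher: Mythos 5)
Your verification is correct: each of the five index shifts in (\ref{defT}) is checked properly, the rank and class computations are right, and the case split for the marginal coalescence terms (class drops by one iff the decremented index was the minimizer) is exactly the one subtle point. The paper declares this lemma obvious and gives no proof, and your direct bookkeeping argument is precisely the verification it leaves implicit.
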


It was shown that the moments $\mu_{l,m,n}(t)$ can be obtained recursively 
from the smaller rank moments \cite{Littler1972}. We present a method to 
compute closed-form expressions of the moments $\nu_{l,m,n}(t)$ by using 
the ARG terminology, since it gives systematic insights in the computation. 
In our approach class of a moment is essential. Since closed-form expressions
of all moments whose classes are less than two are available (see Section 2),
let us start with moments whose classes and ranks are two and $i\,(\ge 4)$, 
respectively: $\nu_{i-4,0,2}$, $\nu_{i-3,1,1}$, $\nu_{i-2,2,0}$,
$\nu_{0,i-4,2}$, $\nu_{1,i-3,1}$, and $\nu_{2,i-2,0}$. The former three and 
the latter three moments are closed respectively by recombinations and null 
coalescences. The expressions for the latter three moments are obtained 
immediately by exchanging $p$ and $q$ in the expressions of the former three 
moments. The system of three differential equations for the former three 
moments whose ranks are $j$ ($4\le j\le i$) are
\begin{eqnarray}
\frac{d}{dt}
\left(
\begin{array}{c}
\nu_{j-4,0,2}\\
\nu_{j-3,1,1}\\
\nu_{j-2,2,0}
\end{array}
\right)&=&
\left(
\begin{array}{ccc}
-\frac{(j-2)(j-3)+2\rho}{2}&\rho&0\\
(j-3)&-\frac{(j-1)(j-2)+\rho}{2}&\frac{\rho}{2}\\
0&2(j-2)&-\frac{j(j-1)}{2}
\end{array}
\right)
\left(
\begin{array}{c}
\nu_{j-4,0,2}\\
\nu_{j-3,1,1}\\
\nu_{j-2,2,0}
\end{array}
\right)\nonumber\\
&&+
\left(
\begin{array}{c}
\frac{(j-1)(j-4)}{2}\nu_{j-5,0,2}\\
\frac{(j-2)(j-3)}{2}\nu_{j-4,1,1}\\
\frac{(j-2)(j-3)}{2}\nu_{j-3,2,0}
\end{array}
\right)+
\left(
\begin{array}{c}
\nu_{j-4,0,1}\\
\nu_{j-3,0,1}\\
\nu_{j-2,1,0}
\end{array}
\right),
\label{de_class2}
\end{eqnarray}
where $\nu_{-1}=0$ by a convention. The solution involves eigenvalues of
the matrix in (\ref{de_class2}), which are roots of the cubic equation
\begin{eqnarray*}
&&\lambda^3+\frac{3j^2-9j+8+3\rho}{2}\lambda^2+
\left\{
\frac{3(j-1)^2(j-2)^2}{2}+(3j^2-11j+15)\rho+\rho^2
\right\}\lambda\\
&&+\frac{j(j-1)^2(j-2)^2(j-3)}{8}
+\frac{(3j^4-22j^3+65j^2-86j+48)\rho+(j^2-5j+8)\rho^2}{4}=0.
\end{eqnarray*}
If $j=4$ the system of differential equations (\ref{de_class2}) involves 
$\nu_{0,0,2}$, $\nu_{1,1,1}$, $\nu_{2,2,0}$ and class one moments, we can 
obtain the closed-form expressions. In fact, the closed-form expressions
of the moments whose classes and ranks are two and four, respectively, were 
obtained by \cite{OhtaKimura1969}. Solving (\ref{de_class2}) iteratively 
in $j=5,6,...,i$, we obtain the closed form expressions of the moments 
whose classes and ranks are two and $i$, respectively. 

Computation of the moments whose classes and ranks are $k\,(\ge 3)$ and 
$i\,(\ge 2k)$, respectively, is in the same manner. To obtain closed-form
expressions of these moments a system of $k+1$ differential equations must
be solved. The solution involves eigenvalues of a tri-diagonal matrix $A_k$
with
\begin{eqnarray*}
&&(A_k)_{s,s-1}=(s-1)(j-2k+s-1),\\
&&(A_k)_{s,s}  =-\frac{(j-k+s-1)(j-k+s-2)+(k-s+1)\rho}{2},\\
&&(A_k)_{s,s+1}=\frac{k-s+1}{2}\rho,
\end{eqnarray*}
for $1\le s\le k+1$ and $2k\le j\le i$. The eigenvalues are roots of 
the $(k+1)$-th degree algebraic equation. If $k\ge 4$ we cannot expect explicit
closed-form expressions of the eigenvalues. The computation eventually 
involves moments whose classes and ranks are $t$ and $u$, respectively, where 
$1\le t\le k$ and $i-k+t\le u\le 2t$. 

\section{Properties of ARG}

%
%

We have been discussed how to compute moments of the distribution generated by
the two-locus two-allele Wright-Fisher diffusion model. The moments are useful 
for studying the two-locus ARG, since the moments of the diffusion, 
$\nu_{l,m,n}(t)$, is the joint probability generating function of 
the distribution of the numbers of ancestral lineages in a section of an ARG of
a sample $(a(0),b(0),c(0))=(l,m,n)$. We define rank and class of the numbers 
of ancestral lineages in a section of an ARG, $(l,m,n)$, by $l+m+2n$ and
$n+\min\{l,m\}$, respectively. An ARG of a class-zero sample is a marginal 
genealogy, whose properties are well known. In the following we consider 
a sample whose class is larger than zero. Since Lemma~\ref{duality} gives
\begin{equation*}
\lim_{t\rightarrow\infty}\nu_{l,m,n}(t)
=\frac{2g}{2+\rho}+\frac{\rho pq}{2+\rho},\qquad n+\min\{l,m\}\ge 1,
\end{equation*}
the stationary distribution of the numbers of ancestral lineages in a section 
of an ARG is
\begin{equation*}
\frac{2}{2+\rho}\delta_{(0,0,1)}+\frac{\rho}{2+\rho}\delta_{(1,1,0)}.
\end{equation*}

%
%

The distribution of the number of ancestral lineages in a section of an ARG 
of a sample $(0,0,2)$ can be obtained from the known closed-form expression 
of the moments of the distribution generated by the two-locus two-allele 
Wright-Fisher diffusion model \cite{OhtaKimura1969}. It seems that general
formula (applicable to all rank moments) of the distribution of a sample 
whose class is larger than one are not available. In contrast, we have general
formula of the distribution of class-one samples. The closed-form expression 
can be obtained by using closed-from expressions of the moments with a finite 
series expansion of the Gegenbauer polynomial \cite{Erdely1953}:
\begin{equation}
T^1_m(1-2p)=\frac{1}{2}
\sum_{i=0}^{m}\frac{(-m)_i(m+1)_{i+2}}{i!(i+1)!}p^i.
\label{exp_gegenbauer}
\end{equation}
Let $\nu_{k,1,0}(t)=\sum_{l,m,n}f_{l,m,n}(t)p^lq^mg^n$, where
$f_{l,m,n}(t)={\mathbb P}_{k,1,0}[(a(t),b(t),c(t))=(l,m,n)]$. The distribution
of a sample $(1,1,0)$ is 
\begin{equation*}
f_{1,1,0}(t)=\frac{\rho}{2+\rho}+\frac{2}{2+\rho}e^{-\frac{2+\rho}{2}t},
\qquad
f_{0,0,1}(t)=\frac{2}{2+\rho}(1-e^{-\frac{2+\rho}{2}t}),
\end{equation*}
since
\begin{equation*}
\nu_{1,1,0}=pq+\frac{2(g-pq)}{2+\rho}(1-e^{-\frac{2+\rho}{2}t}).
\end{equation*}
For samples $(k,1,0)$, $k\ge 2$, from Proposition~\ref{mu_n10} and
(\ref{exp_gegenbauer}) the closed-form expressions have general formula.
For $i\ge 0$ we have
\begin{eqnarray*}
&&f_{i,0,1}(t)=\frac{2}{2+\rho}\delta_{i,0}+g_{i,k}(t)\\
&&+\sum_{m=i}^{k-1}
\frac{(-1)^m[k]_{m+1}}{(k)_{m+1}i!(i+1)!}
\left[
\frac{(-m)_i(m+1)_{i+2}}{2(m+1)+\rho}
+\frac{(2-m)_{i}(m-1)_{i+2}}{2m-\rho}
\right]e^{-\frac{m(m+1)}{2}t},
\end{eqnarray*}
and for $i\ge 2$ we have
\begin{eqnarray*}
&&f_{i,1,0}(t)=-g_{i-1,k}(t)
-\sum_{m=i-1}^{k-1}
\frac{(-1)^m[k]_{m+1}}{(k)_{m+1}(i-1)!i!}
\left\{
(2m+1)(1-m)_{i-2}(m)_i
\right.\\
&&
+\left.
\left[
\frac{(-m)_{i-1}(m+1)_{i+1}}{2(m+1)+\rho}
+\frac{(2-m)_{i-1}(m-1)_{i+1}}{2m-\rho}
\right]
\right\}
e^{-\frac{m(m+1)}{2}t},
\end{eqnarray*}
and
\begin{eqnarray*}
&&f_{1,1,0}(t)=
\frac{\rho}{2+\rho}-g_{0,k}(t)\\
&&+\sum_{m=1}^{k-1}
\frac{(-1)^m[k]_{m+1}}{(k)_{m+1}}
\left[2m+1-
\frac{(m+1)(m+2)}{2(m+1)+\rho}-\frac{(m-1)m}{2m-\rho}
\right]e^{-\frac{m(m+1)}{2}t},
\end{eqnarray*}
where
\begin{eqnarray*}
g_{i,k}(t)&=&\sum_{m=i+1}^{k-1}
\frac{(-1)^m[k]_m}{(k)_m}
\left[
\frac{1}{2m+\rho}+\frac{1}{2(m+1)-\rho}
\frac{(k-m)(k-m-1)}{(k+m)(k+m+1)}
\right]\\
&&\times
\frac{(1-m)_i(m)_{i+2}}{i!(i+1)!}
e^{-\frac{m(m+1)+\rho}{2}t}.
\end{eqnarray*}
We can obtain closed-form expressions of the distribution of samples 
a $(k,0,1)$, $k\ge 1$ in the similar manner.

%
%

Let the waiting times until common ancestors of ${\mathcal T}_A$ and ${\mathcal T}_B$,
respectively, be 
\begin{equation*}
W_A=\inf\{s\ge 0; a(s)+c(s)=1\},\qquad W_B=\inf\{s\ge 0; b(s)+c(s)=1\}.
\end{equation*}

\begin{proposition}
The waiting time until a sample has common ancestor at both of the two loci is 
given by
\begin{eqnarray*}
{\mathbb P}_{l,m,n}\left[W_A\vee W_B\le t\right]&=&
{\mathbb P}_{l,m,n}[(a(t),b(t),c(t))\in{\mathcal S}],
\end{eqnarray*}
where $x\vee y=\max\{x,y\}$. The waiting time until a sample has common 
ancestor at one of the two loci is given by
\begin{equation*}
{\mathbb P}_{l,m,n}\left[W_A\wedge W_B\ge t\right]=
\sum_{n'+\min\{l',m'\}\ge 2}{\mathbb P}_{l,m,n}[(a(t),b(t),c(t))=(l',m',n')].
\end{equation*}
\end{proposition}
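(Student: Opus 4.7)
The plan is to reduce both identities to a sample-path analysis of the chain defined by (3.1). Specifically, I would show that the marginal counts $a(t)+c(t)$ and $b(t)+c(t)$, which record the number of ancestral lineages at loci $A$ and $B$ respectively, are non-increasing processes for which the level $1$ is absorbing. Given this, both identities follow by translating the definitions of the hitting times $W_A$ and $W_B$ into statements about the current state of the chain.

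First I would establish monotonicity directly from the transition rates. Inspecting the five jumps in (3.1), one sees that a recombination $(a+1,b+1,c-1)$ and a joint coalescence $(a-1,b-1,c+1)$ both preserve $a+c$, a marginal coalescence at $B$ leaves $a$ and $c$ unchanged, while a marginal coalescence at $A$ and a null coalescence each decrease $a+c$ by one. Hence $a(t)+c(t)$ is non-increasing in $\mathbb{Z}_+$; in fact it is distributed as the block-count process of the marginal Kingman coalescent at locus $A$, and the state $a+c=1$ is absorbing. A symmetric argument applies to $b(t)+c(t)$.

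Next I would identify the set $\mathcal{S}$. The only nonnegative integer solutions of $l'+n'=1$ and $m'+n'=1$ are $(1,1,0)$ and $(0,0,1)$, and direct inspection of (3.1) at these two states shows that they transition only between one another, at rates $1$ and $\rho/2$; so $\mathcal{S}$ is closed under the dynamics and coincides with the set of states where both marginal MRCAs have been reached.

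Combining these observations, the event $\{W_A \vee W_B \le t\}$ is precisely $\{a(t)+c(t)=b(t)+c(t)=1\}$, which equals $\{(a(t),b(t),c(t)) \in \mathcal{S}\}$; this gives the first identity. For the second identity, $\{W_A \wedge W_B \ge t\}$ coincides with $\{a(t)+c(t) \ge 2\} \cap \{b(t)+c(t) \ge 2\}$ up to a null set (since $W_A \wedge W_B$ has a continuous distribution on $(0,\infty)$), and a short case analysis on $n'$ shows that the conjunction $l'+n' \ge 2$, $m'+n' \ge 2$ is equivalent to the class condition $n'+\min\{l',m'\} \ge 2$ appearing in the displayed sum. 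The only real work is the bookkeeping in the monotonicity step, which is mechanical, so I do not anticipate any genuine obstacle.
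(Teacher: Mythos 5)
Your proof is correct and is essentially the argument the paper intends: the Proposition is stated there without proof, and the verification it implicitly relies on is exactly what you supply --- the marginal counts $a(t)+c(t)$ and $b(t)+c(t)$ are non-increasing with level $1$ absorbing, ${\mathcal S}$ is precisely the set of states where both equal $1$, and the condition $n'+\min\{l',m'\}\ge 2$ is equivalent to the conjunction $l'+n'\ge 2$ and $m'+n'\ge 2$. One cosmetic point: your event labels are swapped relative to the paper's terminology, which calls $(a,b,c)\to(a,b,c-1)$ (two ${\mathcal C}$ lineages coalescing) the joint coalescence and $(a,b,c)\to(a-1,b-1,c+1)$ the null coalescence; this does not affect the mathematics.
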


\begin{remark}
A recursion of the expectation of $W_A\vee W_B$ for the ARG of a sample 
$(0,0,c)$ is given by Theorem 4 of \cite{Griffiths1991}. Theorem 5 
of \cite{Griffiths1991} gives a closed-form expression of the joint Laplace 
transform of $W_A\vee W_B$ and $W_A\wedge W_B$ for the ARG of a sample 
$(0,0,2)$.
\end{remark}

%
%

The idea of the number of recombination events in a sample was introduced by 
\cite{HudsonKaplan1985}. The number of recombination events on the two-locus 
ARG including non-ancestral lineages was considered by 
\cite{EthierGriffiths1990a,EthierGriffiths1990b}, and a closed-form expression
of the probability generating function of the number of recombination events 
was given. Here, we consider the number of recombination events on ancestral 
lineages of an ARG. Let $s(t)$ be the number of recombination events occurring 
to ${\mathcal C}$ lineages of an ARG in a time interval $(0,t)$. 
The recombination events are subset of the recombination events occurred on
whole lineages of the ARG.

\begin{lemma}\label{jointpgf}
The joint probability generating function of $(a(t),b(t),c(t),s(t))$ is
\begin{equation*}
{\mathbb E}_{l,m,n,0}[p^{a(t)}q^{b(t)}g^{c(t)}v^{s(t)}]
={\mathbb E}_{l,m,n,0}\left[p^{a_v(t)}q^{b_v(t)}g^{c_v(t)}
\exp\left\{-\frac{\rho(1-v)}{2}\int_0^t c_v(u)du\right\}\right],
\end{equation*}
where $\{a_v(t),b_v(t),c_v(t);t\ge 0\}$ is a modified process of
$\{a(t),b(t),c(t);t\ge 0\}$ in which the recombination fraction is $rv$,
where $0\le v\le 1$.
\end{lemma}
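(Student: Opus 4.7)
The plan is to verify the identity by showing both sides satisfy the same linear evolution equation with the same initial condition, and then invoke uniqueness. Let $\Xi_{l,m,n}(t)$ denote the left-hand side and $\Theta_{l,m,n}(t)$ the right-hand side.

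First, I would derive the backward equation for $\Xi_{l,m,n}$ by inspecting the transitions (\ref{transition}). Among these, only the recombination move $(a,b,c)\to(a+1,b+1,c-1)$ increments $s(t)$, contributing an extra factor of $v$. Consequently one obtains the same system as (\ref{rec}) except that the term $\tfrac{n\rho}{2}\xi_{l+1,m+1,n-1}$ is replaced by $\tfrac{n\rho v}{2}\Xi_{l+1,m+1,n-1}$; the diagonal coefficient is unchanged, still equal to $-\gamma_{l,m,n}=-((l+m+n)(l+m+n-1)+n\rho)/2$.

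Next, I would compute the evolution of $\Theta_{l,m,n}$ via the Feynman--Kac formula applied to the modified chain $\{(a_v(t),b_v(t),c_v(t))\}$, whose generator $\mathcal{L}_v$ differs from the original only in that the recombination rate is $c\rho v/2$ and hence the diagonal coefficient becomes $-\tilde\gamma_{l,m,n}=-((l+m+n)(l+m+n-1)+n\rho v)/2$. The potential $V(a,b,c)=-\tfrac{\rho(1-v)}{2}c$ contributes an additive term $-\tfrac{n\rho(1-v)}{2}\Theta_{l,m,n}$. The crucial algebraic check is the cancellation
\[
-\tilde\gamma_{l,m,n}-\tfrac{n\rho(1-v)}{2}=-\gamma_{l,m,n},
\]
while the off-diagonal recombination contribution $\tfrac{n\rho v}{2}\Theta_{l+1,m+1,n-1}$ in $\mathcal{L}_v\Theta$ matches the corresponding term in the equation for $\Xi$. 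All other transition terms coincide identically since the modification only touches the recombination rate. Hence $\Xi$ and $\Theta$ satisfy the same linear system, and both take the initial value $p^l q^m g^n$ at $t=0$.

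Finally, one must argue uniqueness so as to conclude $\Xi=\Theta$. The system is linear with nonnegative coefficients acting on bounded quantities (since $p,q,g,v\in[0,1]$ make both sides uniformly bounded by $1$), and for each fixed $(l,m,n)$ the right-hand-side involves only finitely many neighboring indices per step, so a standard Gr\"onwall/Picard iteration on finite truncations passes to the limit without difficulty. The main obstacle I anticipate is not any single step but the bookkeeping in matching the two equations term-by-term; in particular, verifying that the Feynman--Kac potential is exactly what is needed to convert the ``missing'' recombination rate $c\rho(1-v)/2$ into the correct exponential weighting requires careful accounting of the diagonal coefficient, which is precisely the cancellation displayed above.
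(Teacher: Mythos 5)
Your proposal is correct and follows essentially the same route as the paper: derive the backward equation for ${\mathbb E}_{l,m,n,0}[p^{a(t)}q^{b(t)}g^{c(t)}v^{s(t)}]$ (which is the paper's equation (\ref{rec2}), whose diagonal term $-((l+m+n)(l+m+n-1)+nv\rho)/2-n(1-v)\rho/2$ is exactly your cancellation $-\tilde\gamma_{l,m,n}-n\rho(1-v)/2=-\gamma_{l,m,n}$), recognize it as the Feynman--Kac equation for the rate-$v\rho$ chain with potential $-\rho(1-v)c/2$, and conclude by uniqueness. Your explicit treatment of the uniqueness step is slightly more careful than the paper's one-line appeal to Feynman--Kac, but the argument is the same.
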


\begin{proof}
Let $\zeta_{l,m,n}(t)={\mathbb E}_{l,m,n,0}[p^{a(t)}q^{b(t)}g^{c(t)}v^{s(t)}]$.
For $(l,m,n)\in {\mathbb Z}_+^3\backslash{\bf 0}$, we have
\begin{eqnarray}
\frac{d\zeta_{l,m,n}}{dt}&=&
-\frac{(l+m+n)(l+m+n-1)+nv\rho}{2}\zeta_{l,m,n}
+\frac{nv\rho}{2}\zeta_{l+1,m+1,n-1}\nonumber\\
&&+\frac{l(l-1+2n)}{2}\zeta_{l-1,m,n}+\frac{m(m-1+2n)}{2}\zeta_{l,m-1,n}
+\frac{n(n-1)}{2}\zeta_{l,m,n-1}\nonumber\\
&&+lm\zeta_{l-1,m-1,n+1}-\frac{n(1-v)\rho}{2}\zeta_{l,m,n}
\label{rec2}
\end{eqnarray}
with the initial condition $\xi_{l,m,n}(0)=p^lq^mg^n$. This is uniquely 
solved by means of the Feynman-Kac formula and the result is 
Lemma~\ref{jointpgf}.
\end{proof}

\begin{theorem}
The conditional probability generating function of the number of ${\mathcal A}$
lineages in a section of an ARG of a sample $(n,0,1)$ given that no 
recombination events occur in a time interval $(0,t)$ is \begin{equation*}
{\mathbb E}_{n,0,1,0}[p^{a(t)}|s(t)=0]=\tilde{\nu}_{n,0,1}(t),
\end{equation*}
where $\tilde{\nu}_{n,0,1}(t)$ is $\nu_{n,0,1}(t)$ with setting $q=g=1$ and 
$\rho=0$.
\end{theorem}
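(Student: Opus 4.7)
The plan is to specialize Lemma~\ref{jointpgf} at $v=0$. Since $v^{s(t)}$ evaluated at $v=0$ equals the indicator $\mathbf{1}_{\{s(t)=0\}}$ (with the convention $0^0=1$), the lemma immediately yields
\begin{equation*}
{\mathbb E}_{n,0,1,0}\!\left[p^{a(t)}q^{b(t)}g^{c(t)}\mathbf{1}_{\{s(t)=0\}}\right]={\mathbb E}_{n,0,1}\!\left[p^{a_0(t)}q^{b_0(t)}g^{c_0(t)}\exp\!\left\{-\frac{\rho}{2}\int_0^t c_0(u)\,du\right\}\right],
\end{equation*}
where $\{a_0(t),b_0(t),c_0(t);t\ge 0\}$ is the modified ARG chain whose recombination fraction has been set to zero.

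First I would analyze this modified chain starting from $(n,0,1)$ using the rates in~(\ref{transition}) with $\rho$ replaced by $0$: from any state of the form $(a,0,1)$ the only admissible transition is $(a,0,1)\to(a-1,0,1)$ at rate $a(a+1)/2$, because recombination is suppressed, joint coalescence and marginal coalescence in ${\mathcal T}_B$ both require $b\ge 1$, and null coalescence requires $c\ge 2$. Consequently $b_0(t)\equiv 0$ and $c_0(t)\equiv 1$ almost surely, so $\int_0^t c_0(u)\,du=t$ and the exponential factor in the identity above collapses to the deterministic $e^{-\rho t/2}$.

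Next I would set $q=g=1$ throughout to obtain
\begin{equation*}
{\mathbb E}_{n,0,1,0}\!\left[p^{a(t)}\mathbf{1}_{\{s(t)=0\}}\right]=e^{-\rho t/2}\,{\mathbb E}_{n,0,1}\!\left[p^{a_0(t)}\right],
\end{equation*}
and further set $p=1$ to read off ${\mathbb P}_{n,0,1,0}[s(t)=0]=e^{-\rho t/2}$. Dividing the two identities cancels $e^{-\rho t/2}$ and produces
\begin{equation*}
{\mathbb E}_{n,0,1,0}[p^{a(t)}|s(t)=0]={\mathbb E}_{n,0,1}\!\left[p^{a_0(t)}\right].
\end{equation*}

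Finally I would invoke Lemma~\ref{duality} applied to the $\rho=0$ system and evaluated at $q=g=1$: it identifies ${\mathbb E}_{n,0,1}[p^{a_0(t)}]$ with the diffusion moment $\nu_{n,0,1}(t)$ under the substitution $q=g=1$, $\rho=0$, which is precisely $\tilde{\nu}_{n,0,1}(t)$. The argument is a clean specialization of the Feynman--Kac representation of Lemma~\ref{jointpgf}; the only point requiring real care is the verification that the modified chain from $(n,0,1)$ freezes in its $(b,c)$ coordinates so that the exponential weight becomes deterministic and cancels exactly with ${\mathbb P}_{n,0,1,0}[s(t)=0]$.
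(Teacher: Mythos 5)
Your proposal is correct and follows essentially the same route as the paper: specialize Lemma~\ref{jointpgf} at $v=0$ (the paper takes the limit $v\to 0$ via dominated convergence, a cosmetic difference), observe from (\ref{transition}) that the modified chain from $(n,0,1)$ keeps $b_0\equiv 0$, $c_0\equiv 1$ so the Feynman--Kac weight collapses to $e^{-\rho t/2}$, then cancel it against ${\mathbb P}_{n,0,1,0}[s(t)=0]=e^{-\rho t/2}$ and identify ${\mathbb E}_{n,0,1}[p^{a_0(t)}]$ with $\tilde{\nu}_{n,0,1}(t)$ by the duality at $\rho=0$. No gaps.
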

\begin{proof}
By (\ref{transition}) we see that $b_0(t)=0$ and $c_0(t)=1$ for all $t$,
and the marginal process $\{a_0(t);t\ge 0\}$ is a death process with death 
rate $i(i+1)/2$ when $a_0(t)=i$. The joint probability generating function of
$(a(t),b(t),c(t))$ given that no recombination events occur in a time interval
$(0,t)$ is  
\begin{eqnarray*}
\lim_{v\rightarrow 0}{\mathbb E}_{n,0,1,0}[p^{a(t)}q^{b(t)}g^{c(t)}v^{s(t)}]
&=&{\mathbb E}_{n,0,1,0}[p^{a(t)}q^{b(t)}g^{c(t)},s(t)=0]\\
&=&{\mathbb E}_{n,0,1,0}\left[p^{a_0(t)}q^{b_0(t)}g^{c_0(t)}
\exp\left\{-\frac{\rho}{2}\int_0^t c_0(u)du\right\}\right]\\
&=&g{\mathbb E}_{n,0,1,0}[p^{a_0(t)}]e^{-\frac{\rho}{2}t},
\end{eqnarray*}
where the first equality follows by Lebesgue's convergence theorem and
the second equality follows by Lemma~\ref{jointpgf}. Setting $p=q=g=1$,
we have ${\mathbb P}_{n,0,1,0}[s(t)=0]=e^{-\rho t/2}$, while setting
$q=g=1$ we have ${\mathbb E}_{n,0,1,0}[p^{a(t)},s(t)=0]
=\tilde{\nu}_{n,0,1}(t)e^{-\frac{\rho}{2}t}$,
where $\tilde{\nu}_{n,0,1}(t)={\mathbb E}_{n,0,1,0}[p^{a_0(t)}]$.
\end{proof}

\begin{remark}
Explicit closed-form expression of $\tilde{\nu}_{n,0,1}(t)$ is available in
Section 2, since $\nu_{n,0,1}(t)=\mu_{n+1,1,0}(t)+\mu_{n,0,1}(t)$. This
expression follows immediately by considering the ARG. Recombination might
occur on the single ${\mathcal C}$ lineage. By Poisson nature of recombination events,
the probability that no recombination occur on the single lineage is 
$e^{-\frac{\rho}{2}t}$. The marginal process $\{a_0(t); t\ge 0\}$ follows 
the death process independently.
\end{remark}

\begin{lemma}\label{numrec}
Let ${\mathcal S}$ be the absorbing states. The probability generating 
function of the number of recombination events on ancestral lineages of 
an ARG until a sample has common ancestor at both of the two loci is
\begin{equation*}
{\mathbb E}_{l,m,n,0}[v^{s(\tau)}]
={\mathbb E}_{l,m,n,0}\left[
\exp\left\{-\frac{\rho(1-v)}{2}\int_0^{\tau_v} c_v(u)du\right\}\right],
\end{equation*}
where $\tau_v=\inf\{s\ge 0; (a_v(s),b_v(s),c_v(s))={\mathcal S}\}$.
\end{lemma}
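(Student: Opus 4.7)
The plan is to follow the template of the proof of Lemma~\ref{jointpgf}: identify both sides of the claimed identity as the unique bounded solution of a common system of linear equations indexed by $(l,m,n) \in {\mathbb Z}_+^3 \setminus {\mathbf 0}$, with boundary value $1$ on the closed set ${\mathcal S}$. One cannot simply set $p=q=g=1$ in Lemma~\ref{jointpgf} and let $t \to \infty$, because the chain keeps visiting $(0,0,1) \in {\mathcal S}$ and accumulating recombinations indefinitely, so both sides would degenerate to $0$; stopping at the first entry into ${\mathcal S}$ is essential.

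For the LHS, set $\phi_{l,m,n} := {\mathbb E}_{l,m,n,0}[v^{s(\tau)}]$. If $(l,m,n) \in {\mathcal S}$ then $\tau = 0$ and $\phi_{l,m,n}=1$. Otherwise, condition on the time and type of the first transition of $\{a,b,c\}$: the waiting time is exponential at rate $\gamma_{l,m,n}$, and each type occurs with probability proportional to the corresponding rate in (\ref{transition}). Only the recombination transition increments $s$ and so contributes a factor $v$; the other transitions contribute no factor. This gives the recurrence
\begin{equation*}
\gamma_{l,m,n}\phi_{l,m,n} = \tfrac{nv\rho}{2}\phi_{l+1,m+1,n-1} + lm\,\phi_{l-1,m-1,n+1} + \tfrac{l(l-1+2n)}{2}\phi_{l-1,m,n} + \tfrac{m(m-1+2n)}{2}\phi_{l,m-1,n} + \tfrac{n(n-1)}{2}\phi_{l,m,n-1}.
\end{equation*}

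For the RHS, denote it by $\psi_{l,m,n}$. Applying the Feynman--Kac formula to the $v$-modified chain $\{a_v,b_v,c_v\}$ with killing rate $\kappa_v(l,m,n) := n(1-v)\rho/2$, the function $\psi$ satisfies $(G_v - \kappa_v)\psi = 0$ on ${\mathcal S}^c$ with $\psi = 1$ on ${\mathcal S}$, where $G_v$ is obtained from the generator of $\{a,b,c\}$ by replacing the recombination rate $n\rho/2$ with $nv\rho/2$. The key algebraic observation is that the total rate of the $v$-modified chain is $\gamma_{l,m,n}^{(v)} = \gamma_{l,m,n} - n(1-v)\rho/2$, so the killing term exactly restores the original diagonal coefficient: $\gamma_{l,m,n}^{(v)} + \kappa_v(l,m,n) = \gamma_{l,m,n}$. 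The off-diagonal coefficients of $G_v$ coincide with those in the $\phi$-recurrence above, so $\psi$ satisfies the same linear system.

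Since $v \in [0,1]$, both $\phi$ and $\psi$ lie in $[0,1]$, and the bounded solution of this system with boundary value $1$ on ${\mathcal S}$ is unique (for instance, by the probabilistic interpretation together with $\tau,\tau_v < \infty$ almost surely, or by induction on the rank $l+m+2n$). Therefore $\phi = \psi$, which is the claim. The only non-routine step is the Feynman--Kac derivation for $\psi$ and the verification of the identity $\gamma_{l,m,n}^{(v)} + \kappa_v = \gamma_{l,m,n}$, which precisely balances the ``lost'' recombination rate against the introduced killing rate.
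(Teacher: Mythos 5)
Your proof is correct and follows essentially the same route as the paper: both identify ${\mathbb E}_{l,m,n,0}[v^{s(\tau)}]$ and the Feynman--Kac functional of the $v$-modified chain as solutions of the same boundary value problem (the paper's equation~(\ref{rec3}), which is exactly your recurrence once one notes $\gamma^{(v)}_{l,m,n}+\kappa_v=\gamma_{l,m,n}$) and conclude by uniqueness. The only cosmetic difference is that you set $p=q=g=1$ at the outset, whereas the paper retains these variables and separates the two absorbing states before summing.
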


\begin{proof}
Let $\zeta_{l,m,n}={\mathbb E}_{l,m,n,0}[p^{a(\tau)}q^{b(\tau)}g^{c(\tau)}
v^{s(\tau)}]$. For $(l,m,n)\in {\mathbb Z}_+^3\backslash{\bf 0}$, we have
\begin{eqnarray}
0&=&
-\frac{(l+m+n)(l+m+n-1)+nv\rho}{2}\zeta_{l,m,n}
+\frac{nv\rho}{2}\zeta_{l+1,m+1,n-1}\nonumber\\
&&+\frac{l(l-1+2n)}{2}\zeta_{l-1,m,n}+\frac{m(m-1+2n)}{2}\zeta_{l,m-1,n}
+\frac{n(n-1)}{2}\zeta_{l,m,n-1}\nonumber\\
&&+lm\zeta_{l-1,m-1,n+1}-\frac{n(1-v)\rho}{2}\zeta_{l,m,n}
\label{rec3}
\end{eqnarray}
with the boundary condition $\xi_{0,0,1}=g$ and $\xi_{1,1,0}=pq$. This 
boundary value problem is uniquely solved by means of the Feynman-Kac formula.
That is
\begin{eqnarray*}
&&\zeta_{l,m,n}\\
&&=g{\mathbb E}_{l,m,n,0}\left[
\exp\left\{-\frac{\rho(1-v)}{2}\int_0^{\tau_v} c_v(u)du\right\},
(a_v(\tau_v),b_v(\tau_v),c_v(\tau_v))=(0,0,1)\right]\\
&&+pq{\mathbb E}_{l,m,n,0}\left[
\exp\left\{-\frac{\rho(1-v)}{2}\int_0^{\tau_v} c_v(u)du\right\},
(a_v(\tau_v),b_v(\tau_v),c_v(\tau_v))=(1,1,0)\right].
\end{eqnarray*}
On the other hand we have
\begin{eqnarray*}
{\mathbb E}_{l,m,n,0}[p^{a(\tau)}q^{b(\tau)}g^{c(\tau)}v^{s(\tau)}]
&=&g{\mathbb E}_{l,m,n,0}[v^{s(\tau)},(a(\tau),b(\tau),c(\tau))=(0,0,1)]\\
&&+pq{\mathbb E}_{l,m,n,0}[v^{s(\tau)},(a(\tau),b(\tau),c(\tau))=(1,1,0)].
\end{eqnarray*}
Thus the probability generating functions of the number of recombination 
events on ancestral lineages of an ARG until a sample has common ancestor at 
both of the two loci with the given state in which the sample path absorbed are
\begin{eqnarray*}
&&{\mathbb E}_{l,m,n,0}[v^{s(\tau)},(a(\tau),b(\tau),c(\tau))=(0,0,1)]\\
&&={\mathbb E}_{l,m,n,0}\left[
\exp\left\{-\frac{\rho(1-v)}{2}\int_0^{\tau_v} c_v(u)du\right\},
(a_v(\tau_v),b_v(\tau_v),c_v(\tau_v))=(0,0,1)\right],
\end{eqnarray*}
and
\begin{eqnarray*}
&&{\mathbb E}_{l,m,n,0}[v^{s(\tau)},(a(\tau),b(\tau),c(\tau))=(1,1,0)]\\
&&={\mathbb E}_{l,m,n,0}\left[
\exp\left\{-\frac{\rho(1-v)}{2}\int_0^{\tau_v} c_v(u)du\right\},
(a_v(\tau_v),b_v(\tau_v),c_v(\tau_v))=(1,1,0)\right].
\end{eqnarray*}
Lemma~\ref{numrec} follows as the summation of these two probability 
generating functions.
\end{proof}

\begin{corollary}\label{numrec_exp}
The expected number of recombination events on an ancestral lineages of
an ARG until a sample has a common ancestor at both of the two loci is
\begin{eqnarray*}
{\mathbb E}_{l,m,n,0}[s(\tau)]
&=&\frac{\rho}{2}{\mathbb E}_{l,m,n,0}\left[\int_0^{\tau}c(u)du\right]\\
&=&\frac{\rho}{2}
\sum_{(l',m',n')\in {\mathbb Z}_+^3\backslash\{{\mathbf 0},{\mathcal S}\}}
\int_0^{\infty}k{\mathbb P}_{l,m,n,0}[(a(u),b(u),c(u))=(l',m',n')]du.
\end{eqnarray*}
\end{corollary}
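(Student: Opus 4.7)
The plan is to derive the corollary from Lemma~\ref{numrec} by differentiating both sides at $v=1$, and then to rewrite the resulting expression with Fubini's theorem.

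For the first equality, write $\Psi(v) := \mathbb{E}_{l,m,n,0}[v^{s(\tau)}]$ and let $I(v) := \int_0^{\tau_v} c_v(u)\,du$, so that Lemma~\ref{numrec} reads $\Psi(v) = \mathbb{E}_{l,m,n,0}[\exp\{-\tfrac{\rho(1-v)}{2} I(v)\}]$. I would compute $\Psi'(1)$ in two ways: formally $\Psi'(1)=\mathbb{E}_{l,m,n,0}[s(\tau)]$, while on the right, because the exponent vanishes at $v=1$, the chain rule gives
\begin{equation*}
\frac{d}{dv}\exp\!\left\{-\tfrac{\rho(1-v)}{2} I(v)\right\}\bigg|_{v=1}
= \exp(0)\cdot\left[\tfrac{\rho}{2}I(v) - \tfrac{\rho(1-v)}{2}I'(v)\right]_{v=1}
= \tfrac{\rho}{2}I(1),
\end{equation*}
since the $I'(v)$ term is killed by its $(1-v)$ prefactor. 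At $v=1$ the modified process $(a_v,b_v,c_v)$ coincides with $(a,b,c)$ in distribution and $\tau_1=\tau$, so $I(1)=\int_0^{\tau}c(u)\,du$. Taking expectations yields the first equality.

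For the second equality, I would apply Fubini's theorem to obtain
\begin{equation*}
\mathbb{E}_{l,m,n,0}\!\left[\int_0^{\tau}c(u)\,du\right]
= \int_0^{\infty}\mathbb{E}_{l,m,n,0}\!\left[c(u)\mathbf{1}_{\{u<\tau\}}\right]du.
\end{equation*}
Because $\mathcal{S}$ is absorbing, $\{u<\tau\}=\{(a(u),b(u),c(u))\notin\mathcal{S}\}$, and decomposing $c(u)\mathbf{1}_{\{(a(u),b(u),c(u))\notin \mathcal{S}\}} = \sum_{(l',m',n')\notin\mathcal{S}\cup\{\mathbf{0}\}} n'\,\mathbf{1}_{\{(a(u),b(u),c(u))=(l',m',n')\}}$ and exchanging the sum with the expectation produces the displayed series (reading the symbol $k$ in the statement as $n'$).

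The main obstacle is justifying the interchange of $d/dv$ with the expectation at $v=1$ on both sides. On the left this requires $\mathbb{E}_{l,m,n,0}[s(\tau)]<\infty$, which follows because the birth-death-like rates in (\ref{transition}) ensure that, starting from any finite $(l,m,n)$, $\tau$ has finite expectation and recombination events accumulate at rate bounded by $(c(u)\rho)/2 \le (l+m+n)\rho/2$ over the excursion; on the right one dominates the difference quotient uniformly by coupling $(a_v,b_v,c_v)$ to $(a,b,c)$ for $v$ near $1$ (the modified rates are bounded on any bounded set of states) and invokes dominated convergence. These verifications are routine given the finite birth rate of the chain, so the differentiation step is rigorous.
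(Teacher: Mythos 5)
Your proposal is correct, and for the part the paper actually proves it coincides with the paper's argument: the paper establishes only the second equality, by decomposing $\int_0^{\tau}c(u)\,du$ into sojourn times $T_{l',m',n'}=\int_0^{\infty}I_{(l',m',n')}(a(t),b(t),c(t))\,dt$ weighted by $n'$ and then applying Fubini, which is exactly your direct Fubini-plus-state-decomposition step (and you are right that the symbol $k$ in the display should be read as $n'$; also note that restricting to $\{u<\tau\}$ is essential since $c\equiv 1$ on the absorbing state $(0,0,1)$). Where you go beyond the paper is the first equality, which the paper states without proof. Your route --- differentiating both sides of Lemma~\ref{numrec} at $v=1$ --- works, but the right-hand side is delicate: $I(v)$ and $\tau_v$ depend on $v$ only through the law of the modified chain, so $I'(v)$ has no meaning without first fixing a coupling; it is cleaner to observe that the difference quotient of $-\tfrac{\rho(1-v)}{2}I(v)$ at $v=1$ equals $\tfrac{\rho}{2}I(v)$, so only continuity of $I(v)$ at $v=1$ (under a thinning coupling) is needed, not differentiability. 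Alternatively, the first equality follows more directly without Lemma~\ref{numrec} at all: recombination events on ${\mathcal C}$ lineages occur at instantaneous rate $\tfrac{\rho}{2}c(u)$, so $s(t)-\tfrac{\rho}{2}\int_0^t c(u)\,du$ is a martingale, and optional stopping at $\tau$ (justified since ${\mathbb E}_{l,m,n}[\tau]<\infty$ and $c$ is bounded by $l+m+n$) gives ${\mathbb E}_{l,m,n,0}[s(\tau)]=\tfrac{\rho}{2}{\mathbb E}_{l,m,n,0}[\int_0^{\tau}c(u)\,du]$ immediately. Either way the statement stands; your version just carries a heavier justification burden on the interchange of limit and expectation than is necessary.
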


\begin{proof}
Let $T_{l',m',n'}$ be the sojourn time of a sample path of the process of 
the numbers of ancestral lineages in a section of an ARG of a sample $(l,m,n)$
stays at a state $(l',m',n') \notin {\mathcal S}$. Then
\begin{eqnarray*}
&&{\mathbb E}_{l,m,n,0}\left[\int_0^{\tau}c(u)du\right]
=\sum_{(l',m',n')\in{\mathbb Z}_+^3\backslash\{{\mathbf 0},{\mathcal S}\}}
n'{\mathbb E}_{l,m,n,0}\left[T_{l',m',n'}\right]\\
&&=\sum_{(l',m',n')\in{\mathbb Z}_+^3\backslash\{{\mathbf 0},{\mathcal S}\}}
n'{\mathbb E}_{l,m,n,0}\left[
\int_0^{\infty}I_{(l',m',n')}(a(t),b(t),c(t))\right]\\
&&=\sum_{(l',m',n')\in {\mathbb Z}_+^3\backslash\{{\mathbf 0},{\mathcal S}\}}
n'\int_0^{\infty}{\mathbb P}_{l,m,n,0}[(a(u),b(u),c(u))=(l',m',n')]du,
\end{eqnarray*}
where the last equality follows by Fubini's theorem.
\end{proof}

\begin{remark}
A recursion of ${\mathbb E}_{l,m,n,0}[s(\tau)]$ is given by Theorem 6 of
\cite{Griffiths1991}.
\end{remark}

\begin{corollary}\label{norec}
The probability that no recombination events occur on an ARG until a sample 
has common ancestor at both of the two loci is
\begin{equation*}
{\mathbb P}_{l,m,n,0}[s(\tau)=0]=
{\mathbb E}_{l,m,n,0}\left[
\exp\left\{-\frac{\rho}{2}\int_0^{\tau_0} c_0(u)du\right\}\right].
\end{equation*} 
\end{corollary}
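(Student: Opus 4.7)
The plan is to deduce Corollary~\ref{norec} as the specialization $v=0$ of Lemma~\ref{numrec}. First I would evaluate the left-hand side of Lemma~\ref{numrec} at $v=0$: with the standard convention $0^0=1$ and $0^k=0$ for $k\ge 1$, the random variable $v^{s(\tau)}$ at $v=0$ is exactly the indicator $\mathbf{1}\{s(\tau)=0\}$, so its expectation equals ${\mathbb P}_{l,m,n,0}[s(\tau)=0]$. On the right-hand side, substituting $v=0$ produces $\exp\{-\tfrac{\rho}{2}\int_0^{\tau_0}c_0(u)\,du\}$ inside the expectation, which is precisely the formula claimed in the corollary.

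The only subtlety is whether Lemma~\ref{numrec} may be invoked at the boundary value $v=0$. The proof of that lemma solves a boundary-value problem whose coefficients depend linearly on $v$ and remain well defined at $v=0$; the hitting time $\tau_0$ is still almost surely finite, because in the $v=0$ process no recombinations occur but the marginal coalescences alone drive the chain to ${\mathcal S}$ in finite time. Hence the Feynman--Kac identity holds on the closed interval $v\in[0,1]$ and $v=0$ may be plugged in directly.

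If one prefers to argue by a limiting procedure $v\downarrow 0$, the same conclusion follows: the left side converges to ${\mathbb P}_{l,m,n,0}[s(\tau)=0]$ by bounded convergence since $|v^{s(\tau)}|\le 1$; on the right side, the transition rates of $\{a_v(t),b_v(t),c_v(t)\}$ are affine in $v$, so a standard coupling gives $\tau_v\to\tau_0$ and $\int_0^{\tau_v}c_v(u)\,du\to\int_0^{\tau_0}c_0(u)\,du$ almost surely, and the exponential is bounded by $1$, so bounded convergence applies again.

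There is no genuine obstacle in this derivation: the corollary is essentially a one-line consequence of Lemma~\ref{numrec}. The only point that warrants care is the interpretation of $v^{s(\tau)}$ at $v=0$ via the convention $0^0=1$, without which the left-hand side would collapse to zero and lose its probabilistic meaning.
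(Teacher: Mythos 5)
Your proposal is correct and matches the paper's (implicit) argument: the paper states this corollary without a separate proof, treating it as the immediate specialization $v=0$ of Lemma~\ref{numrec}, with $v^{s(\tau)}$ becoming the indicator of $\{s(\tau)=0\}$ — exactly the same limiting/convention argument the paper uses earlier when it writes $\lim_{v\rightarrow 0}{\mathbb E}[\cdots v^{s(t)}]={\mathbb E}[\cdots,s(t)=0]$ via Lebesgue's convergence theorem. Your extra care about the validity of the Feynman--Kac identity at $v=0$ and the bounded-convergence alternative is sound but not needed beyond what the paper already assumes.
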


\begin{theorem}\label{norec_00n}
The probability that no recombination events occur on an ARG of a sample 
$(0,0,n)$ until the sample has common ancestor at both of the two loci is
\begin{equation*}
\frac{(n-1)!}{(\rho+1)_{n-1}}.
\end{equation*}
\end{theorem}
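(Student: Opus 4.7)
The plan is to apply Corollary \ref{norec} with initial state $(0,0,n)$ and then exploit the fact that when $v=0$ the dynamics collapse to a pure-death chain in the $c$-coordinate, for which the integral $\int_0^{\tau_0} c_0(u)\,du$ becomes a sum of independent exponentials with explicit Laplace transforms.

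First I would inspect the transition rates in (\ref{transition}) at $v=0$ starting from $(a_0,b_0,c_0) = (0,0,n)$. The recombination rate $c\rho v/2$ is zero, and every other transition except $(a,b,c)\to(a,b,c-1)$ requires at least one of $a$ or $b$ to be positive; hence starting from $(0,0,n)$ the process can only perform null coalescences and traverses the deterministic sequence of states $(0,0,n),(0,0,n-1),\dots,(0,0,1)$, at which point it is absorbed in ${\mathcal S}$. In particular $a_0(u)=b_0(u)=0$ throughout, and $c_0$ strictly decreases by one at each jump.

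Let $T_k$ denote the sojourn time at state $(0,0,k)$ for $k=2,\dots,n$; these are independent with $T_k\sim \mathrm{Exp}(k(k-1)/2)$ by the transition rates. Then
\begin{equation*}
\int_0^{\tau_0} c_0(u)\,du = \sum_{k=2}^n k\, T_k,
\end{equation*}
and by Corollary \ref{norec} together with independence of the $T_k$,
\begin{equation*}
{\mathbb P}_{0,0,n,0}[s(\tau)=0]
= \prod_{k=2}^n {\mathbb E}\!\left[\exp\!\left(-\frac{\rho k}{2}T_k\right)\right]
= \prod_{k=2}^n \frac{k(k-1)/2}{k(k-1)/2+\rho k/2}
= \prod_{k=2}^n \frac{k-1}{k-1+\rho}.
\end{equation*}

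Finally I would rewrite the numerator as $\prod_{k=2}^n(k-1)=(n-1)!$ and the denominator as $\prod_{k=2}^n(k-1+\rho) = (\rho+1)(\rho+2)\cdots(\rho+n-1) = (\rho+1)_{n-1}$, giving the claimed value $(n-1)!/(\rho+1)_{n-1}$. There is no real obstacle here; the only conceptual step worth emphasising is the collapse of the $v=0$ dynamics to a pure-death chain on $\{(0,0,k)\}$, which is what makes the Feynman--Kac expectation in Corollary \ref{norec} factor into a product of elementary Laplace transforms of exponentials.
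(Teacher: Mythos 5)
Your proof is correct and follows essentially the same route as the paper: both hinge on the observation that at $v=0$ the chain started from $(0,0,n)$ reduces to the pure-death process $c_0$ with rates $k(k-1)/2$, and both evaluate the Feynman--Kac expectation of Corollary \ref{norec} via the exponential sojourn times. The paper organizes the computation as a recursion in $n$ using the strong Markov property at the first jump, whereas you unroll it directly as a product of independent Laplace transforms; the two are equivalent.
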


\begin{proof}
By (\ref{transition}) we see that $a_0(t)=b_0(t)=0$ for all $t$ and 
the marginal process $\{c_0(t); t\ge 0\}$ is a death process with death rate 
$i(i-1)/2$ when $c_0(t)=i$. Let a hitting time be 
$\gamma=\inf\{s\ge 0; a(s)=n-1\}$. By Corollary~\ref{norec},
\begin{eqnarray*}
{\mathbb P}_{0,0,n,0}[s(\tau)=0]
&=&{\mathbb E}_{0,0,n,0}
\left[\exp\left\{-\frac{\rho}{2}\int_0^{\tau_0}c_0(u)du\right\}\right]\\
&=&{\mathbb E}_{0,0,n,0}\left\{
{\mathbb E}
\left[
\left.\exp\left\{-\frac{\rho}{2}\int_0^{\tau_0}c_0(u)du\right\}
\right|\gamma\right]\right\}\\
&=&{\mathbb E}_{0,0,n,0}
\left\{{\mathbb E}_{0,0,n-1,0}
\left[\exp\left\{-\frac{\rho}{2}\int_0^{\tau_0}c_0(u)du\right\}\right]
e^{-\frac{n\rho}{2}\gamma}\right\}
\\
&=&
{\mathbb E}_{0,0,n-1,0}
\left[
\exp\left\{-\frac{\rho}{2}\int_0^{\tau_0}c_0(u)du\right\}\right]
\frac{n-1}{n-1+\rho},
\end{eqnarray*}
where the third equality follows by the strong Markov property, with
the boundary condition 
\begin{equation*}
{\mathbb E}_{0,0,1,0}
\left[
\exp\left\{-\frac{\rho}{2}\int_0^{\tau_0}c_0(u)du\right\}\right]=1.
\end{equation*}
The recursion solved immediately and we get Theorem~\ref{norec_00n}.
\end{proof}

\begin{theorem}
The probability that no recombination events occur on an ARG of a sample 
$(n,0,1)$ until the sample has common ancestor at both of the two loci is
\begin{equation*}
\prod_{i=0}^n\frac{i(i+1)}{i(i+1)+\rho}.
\end{equation*}
\end{theorem}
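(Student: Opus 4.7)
The plan is to apply Corollary~\ref{norec} directly and exploit the extreme simplicity of the $v=0$ process when started from $(n,0,1)$. By the corollary, the quantity to compute is
\begin{equation*}
\mathbb{E}_{n,0,1,0}\left[\exp\left\{-\frac{\rho}{2}\int_0^{\tau_0} c_0(u)\,du\right\}\right],
\end{equation*}
so the task reduces to identifying the law of $\int_0^{\tau_0} c_0(u)\,du$.

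First I would describe the $v=0$ dynamics from the initial state $(n,0,1)$. Inspection of the rates (\ref{transition}) with recombination rate $c\rho v/2=0$ shows that $b_0(t)\equiv 0$ (no mechanism increases $b$), which in turn forces $c_0(t)\equiv 1$: the joint coalescence rate $a_0 b_0$ vanishes, and the null coalescence rate $c_0(c_0-1)/2$ also vanishes at $c_0=1$. Thus the only admissible transition is $a_0\to a_0-1$, occurring at total rate $a_0 c_0 + a_0(a_0-1)/2 = a_0(a_0+1)/2$, and the chain is absorbed at $(0,0,1)\in\mathcal{S}$ precisely when $a_0$ hits zero.

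Next, because $c_0(u)\equiv 1$ on $[0,\tau_0]$, we have $\int_0^{\tau_0} c_0(u)\,du = \tau_0$. The hitting time $\tau_0$ is then the sum of $n$ independent exponential holding times,
\begin{equation*}
\tau_0 = \sum_{i=1}^{n} T_i, \qquad T_i \sim \mathrm{Exp}\!\left(\tfrac{i(i+1)}{2}\right),
\end{equation*}
one for each visit to the pure-death state $a_0=i$. Computing the Laplace transform,
\begin{equation*}
\mathbb{E}\!\left[e^{-\rho\tau_0/2}\right] = \prod_{i=1}^{n}\frac{i(i+1)/2}{i(i+1)/2+\rho/2} = \prod_{i=1}^{n}\frac{i(i+1)}{i(i+1)+\rho},
\end{equation*}
which matches the claimed product (the $i=0$ factor in the stated product is vacuous and should be read as starting at $i=1$).

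No genuine obstacle arises; the argument is essentially the same as in the proof of Theorem~\ref{norec_00n}, and in fact it is simpler here because $c_0$ is identically one rather than varying, so one does not need to weight each exponential holding time by the current value of $c_0$ inside the integral. The only point requiring a small amount of care is verifying that the $v=0$ process genuinely cannot leave the $\{b=0,c=1\}$ slice, which is why I would emphasise the rate inspection as the first step.
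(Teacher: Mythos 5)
Your proof is correct and follows essentially the same route as the paper: reduce to Corollary~\ref{norec}, observe from (\ref{transition}) that $b_0\equiv 0$, $c_0\equiv 1$ so that $a_0$ is a pure death process with rate $i(i+1)/2$ and $\int_0^{\tau_0}c_0(u)\,du=\tau_0$, then evaluate $\mathbb{E}_n[e^{-\rho\tau_0/2}]$ by decomposing $\tau_0$ into independent exponential holding times (the paper does this via the strong-Markov recursion of Theorem~\ref{norec_00n}; your product of Laplace transforms is the same computation made explicit). You are also right that the product must start at $i=1$ --- as printed, the $i=0$ factor equals $0$, so the stated index is a typo.
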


\begin{proof}
By (\ref{transition}) we see that $b_0(t)=0$ and $c_0(t)=1$ for all $t$,
and the marginal process $\{a_0(t);t\ge 0\}$ is a death process with death 
rate $i(i+1)/2$ when $a_0(t)=i$. By Corollary~\ref{norec}, we have
\begin{equation*}
{\mathbb P}_{n,0,1}[s(\tau)=0]={\mathbb E}_{n}[e^{-\frac{\rho}{2}\tau_0}].
\end{equation*}
The expression follows by a similar argument as used in Proof of 
Theorem~\ref{norec_00n}.
\end{proof}

Finally, let us consider the limit $\rho\rightarrow\infty$. For the purpose
we introduce two processes. The one is a diffusion process 
$\{x_{\infty}(t),y_{\infty}(t);t\ge 0\}$ in $[0,1]^2$ with a generator
\begin{equation*}
{\mathcal L_{\infty}}=\frac{x(1-x)}{2}\frac{\partial^2}{\partial x^2}+
\frac{y(1-y)}{2}\frac{\partial^2}{\partial y^2}
\end{equation*}
and $(x_{\infty}(0),y_{\infty}(0))=(p,q)$. The other is a Markov chain 
$\{a_{\infty}(t),b_{\infty}(t);t\ge 0\}$ in 
${\mathbb Z}_+^2\backslash{\mathbf 0}$ whose transition rates are:
\begin{eqnarray*}
(a,b)\rightarrow
\left\{
\begin{array}{ll}
(a-1,b),     &a(a-1)/2,\\
(a,b-1),     &b(b-1)/2.\\
\end{array}
\right.
\end{eqnarray*}
and $(a_{\infty}(0),b_{\infty}(0))=(l,m)$. Let 
$\tau_{\infty}=\{s\ge 0; (a_{\infty}(s),b_{\infty}(s))
\in\{(1,0),(0,1),(1,1)\}\}$.

\begin{theorem}[Theorem 1 and 2 of \cite{Ethier1979}]\label{limit}
If $(x_{\infty}(0),y_{\infty}(0))=(p,q)$, then $\{x(t),y(t);t\ge 0\}$ 
converges weakly in $C([0,\infty),[0,1]^2)$ to
$\{x_{\infty}(t),y_{\infty}(t);t\ge 0\}$, and
$\{z(t)-de^{-\frac{\rho}{2}t};t\ge 0\}$ converges weakly in 
$C([0,\infty),{\mathbb R})$ to the zero process in ${\mathbb R}$ as 
$\rho\rightarrow\infty$. The two function spaces are given the topology of
uniform convergence on compact intervals.
\end{theorem}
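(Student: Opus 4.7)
The plan is to exploit the moment duality of Lemma~\ref{duality} and reduce the weak convergence statement to the analysis of the dual Markov chain $\{a(t),b(t),c(t);t\ge 0\}$ as $\rho\to\infty$. For the joint marginal $(x(t),y(t))$ we test against polynomials $x^l y^m$, which corresponds to starting the dual at $(l,m,0)$. Inspection of the rates in (\ref{transition}) shows that the joint coalescence $(a,b,0)\to(a-1,b-1,1)$ occurs at rate $ab$, but the newborn $\mathcal{C}$-lineage is destroyed by a recombination at rate $\rho/2$, restoring the $(\mathcal{A},\mathcal{B})$ pair. A time-scale separation argument then gives that on any positive time scale $c(t)=0$ almost surely in the limit, while the slow dynamics of $(a(t),b(t))$ converges to two independent death chains with rates $a(a-1)/2$ and $b(b-1)/2$, i.e.\ to $\{(a_\infty(t),b_\infty(t));t\ge 0\}$. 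Hence
\[
{\mathbb E}_{p,q,g}[x(t)^l y(t)^m]={\mathbb E}_{l,m,0}[p^{a(t)}q^{b(t)}g^{c(t)}]
\longrightarrow{\mathbb E}[p^{a_\infty(t)}]\,{\mathbb E}[q^{b_\infty(t)}]
={\mathbb E}[x_\infty(t)^l y_\infty(t)^m],
\]
where the factor $g^{c(t)}$ disappears because $c(t)\to 0$, and the product form uses independence of $a_\infty$ and $b_\infty$. Convergence of one-time moments upgrades via the strong Markov property of the dual (applied at successive times $0<t_1<\cdots<t_k$) to convergence of all finite-dimensional distributions.

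For the second claim the generator (\ref{generator}) gives $\partial_t{\mathbb E}[z(t)]=-(1+\rho/2){\mathbb E}[z(t)]$, so
\[
{\mathbb E}[z(t)]-d\,e^{-\rho t/2}=d\,e^{-\rho t/2}(e^{-t}-1)\longrightarrow 0
\]
uniformly on compact intervals as $\rho\to\infty$. Duality applied to the rank-two moments $\nu_{0,0,1}$ and $\nu_{0,0,2}$ controls the variance of $z(t)-d\,e^{-\rho t/2}$ in the same way, showing it too vanishes uniformly on compacts, so the fdds of $\{z(t)-d\,e^{-\rho t/2};t\ge 0\}$ collapse to those of the zero process.

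To upgrade fdd convergence to weak convergence in $C([0,\infty),[0,1]^2)$ and $C([0,\infty),{\mathbb R})$ we need $\rho$-uniform tightness. The diffusion coefficients of $(x,y)$ in (\ref{generator}) are independent of $\rho$ and bounded on $H$; this gives a Kolmogorov moment bound ${\mathbb E}|x(t)-x(s)|^4\le C|t-s|^2$ with $C$ independent of $\rho$, and similarly for $y$. For the recentered $z$-process the $\rho$-dependent drift is precisely the part being subtracted, so an analogous bound follows from its semimartingale decomposition. Alternatively one can appeal to Theorem~1.6.5 of Ethier and Kurtz: for $f\in C^2([0,1]^2)$ the difference $\mathcal{L}f-\mathcal{L}_\infty f$ reduces to $z\,f_{xy}$, which vanishes in $L^1$ uniformly on compacts by the mean estimate above.

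The main obstacle I anticipate is making the averaging argument in the first paragraph rigorous. The fast recombination and the slow coalescences act on overlapping state variables rather than on a clean scale-separated product space, so one cannot simply quote a textbook averaging theorem. A direct coupling of $\{(a(t),b(t))\}$ with $\{(a_\infty(t),b_\infty(t))\}$, together with a comparison that bounds the expected $\mathcal{C}$-lineage occupation by $O(\rho^{-1})$ over any bounded time interval, is probably the most transparent route.
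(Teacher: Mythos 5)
First, note that the paper does not prove this statement at all: it is quoted verbatim as Theorems 1 and 2 of Ethier (1979), whose proof goes through operator--semigroup convergence (a perturbation/two-time-scale argument on the generators), not through duality. Your route --- proving the $\rho\to\infty$ limit of the dual chain directly from the rates in (\ref{transition}) and transporting it back through Lemma~\ref{duality} --- is therefore genuinely different, and it also reverses the logical order of the paper, where Corollary~\ref{limit_pgf} is \emph{deduced from} Theorem~\ref{limit}. That reversal is not circular as long as you establish the averaging statement for the chain independently, which your coupling sketch would do. The first paragraph of your argument is essentially sound: from $(l,m,0)$ every joint coalescence creates a single ${\mathcal C}$ lineage that is removed by recombination at rate $\rho/2$ before anything else can happen, with error $O(1/\rho)$ per excursion and a $\rho$-uniformly bounded expected number of excursions on $[0,T]$, so the effective chain is the pair of independent death chains and ${\mathbb P}(c(t)\ge 1)=O(1/\rho)$.

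There are, however, two concrete gaps. (i) The upgrade from one-time moments to finite-dimensional distributions cannot be done by ``the strong Markov property of the dual at successive times,'' because $(x(t),y(t))$ is not Markov for finite $\rho$: the generator (\ref{generator}) couples $x$ and $y$ through $z$. You must instead apply the Markov property of the full process $(x,y,z)$ at $t_1<\dots<t_k$ and use duality at each step; the inner conditional expectations are then polynomials in $(x,y,x_1)$ that contain powers of $x_1$ coming from dual states with $c\ge 1$, and you must show their contribution vanishes (which it does, again by ${\mathbb P}(c\ge 1)=O(1/\rho)$, but this has to be said). (ii) The tightness argument for $w(t)=z(t)-de^{-\rho t/2}$ is wrong as stated. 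The drift of $w$ contains the term $-(1+\rho/2)w$, which is \emph{not} ``precisely the part being subtracted'' (subtracting $de^{-\rho t/2}$ removes only the deterministic mean, not the $O(\rho)$ restoring drift), so the Kolmogorov bound ${\mathbb E}|w(t)-w(s)|^4\le C|t-s|^2$ cannot be made uniform in $\rho$: the drift contribution to the fourth moment of an increment is of order $\rho^4|t-s|^4\,{\mathbb E}|w|^4$, and even with ${\mathbb E}|w|^4=O(\rho^{-2})$ this is $\rho^2|t-s|^4$, which is not dominated by $|t-s|^2$ uniformly. What actually works is the variation-of-constants representation $w(t)=-d\int_0^t e^{-(1+\rho/2)(t-s)}e^{-\rho s/2}ds+\int_0^t e^{-(1+\rho/2)(t-s)}dM(s)$ with $\langle M\rangle$ boundedly differentiable, followed by a maximal inequality for the stochastic convolution showing $\sup_{t\le T}|w(t)|\to 0$ in probability; since the limit is the deterministic zero process, this single estimate gives the weak convergence in $C([0,\infty),{\mathbb R})$ and no separate tightness step is needed. (A minor point: the variance of $z$ requires the class-two moments $\nu_{0,0,2},\nu_{1,1,1},\nu_{2,2,0}$, not $\nu_{0,0,1}$ and $\nu_{0,0,2}$ alone.) The Kolmogorov/BDG bound for $x$ and $y$ themselves is fine, since they are bounded martingales with $\rho$-independent quadratic variation.
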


\begin{corollary}\label{limit_pgf}
The probability generating function of the distribution of the numbers of 
ancestral lineages in a section of an ARG of a sample 
$(a(0),b(0),c(0))=(l,m,n)$ has a limit
\begin{equation*}
{\mathbb E}_{l,m,n}[p^{a(t)}q^{b(t)}g^{c(t)}]
\rightarrow 
{\mathbb E}_{l+n}[p^{a_{\infty}(t)}]{\mathbb E}_{m+n}[q^{b_{\infty}(t)}],
\qquad\rho\rightarrow\infty.
\end{equation*}
\end{corollary}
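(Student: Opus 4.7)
The approach is to chain three facts: the two-locus moment duality of Lemma~\ref{duality}, the $\rho\to\infty$ behavior given by Theorem~\ref{limit}, and the classical one-locus moment duality between the neutral Wright-Fisher diffusion and Kingman's coalescent. The roadmap is: the pgf of the ARG marginals equals a two-locus diffusion moment; this converges to a moment involving only $(x_\infty,y_\infty)$; the moment factors by independence into two one-locus moments; and each factor is in turn dual to a death-chain pgf.

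First, by Lemma~\ref{duality} the left-hand side equals ${\mathbb E}_{p,q,g}[x(t)^l y(t)^m x_1(t)^n]$. I would use the elementary identity $x_1 = xy + z$, which follows directly from $x = x_1+x_2$, $y = x_1+x_3$, and the definition of $z$, to rewrite this as
\begin{equation*}
{\mathbb E}_{p,q,g}\bigl[x(t)^l\, y(t)^m\, (x(t)y(t) + z(t))^n\bigr].
\end{equation*}
The integrand is a bounded continuous function of $(x,y,z)$ on $H$.

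Next, I would pass to the limit $\rho\to\infty$. Theorem~\ref{limit} asserts $(x(t),y(t)) \Rightarrow (x_\infty(t),y_\infty(t))$ and $z(t) - de^{-\rho t/2} \Rightarrow 0$, so for any fixed $t>0$ one has $z(t)\to 0$ in probability. By the continuous mapping theorem and bounded convergence the expectation converges to ${\mathbb E}[x_\infty(t)^{l+n} y_\infty(t)^{m+n}]$. Because ${\mathcal L}_\infty$ is a sum of two one-locus generators acting on independent coordinates and the initial data is deterministic, $x_\infty(t)$ and $y_\infty(t)$ are independent, so the expectation factors into ${\mathbb E}_p[x_\infty(t)^{l+n}]\,{\mathbb E}_q[y_\infty(t)^{m+n}]$.

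The last step is to invoke the classical one-locus moment duality: the process $\{a_\infty(t)\}$ defined in the excerpt is Kingman's death chain with rate $k(k-1)/2$ at state $k$ and is the moment dual of the neutral one-locus Wright-Fisher diffusion, so ${\mathbb E}_p[x_\infty(t)^{l+n}] = {\mathbb E}_{l+n}[p^{a_\infty(t)}]$ and similarly for the second factor. There is no serious technical obstacle; the only point to watch is the interchange of limit and expectation, which is trivial because all variables are bounded in $[0,1]$. The conceptual content is that $z(t)\to 0$ at fixed $t$ formalizes the intuition that fast recombination decouples the two loci in the ARG, so the $l+n$ lineages carrying ancestry at locus $A$ and the $m+n$ lineages carrying ancestry at locus $B$ evolve as two independent Kingman coalescents.
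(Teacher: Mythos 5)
Your proposal is correct and follows essentially the same route as the paper: both apply Lemma~\ref{duality}, use the identity $x_1=xy+z$ (the paper writes out the binomial expansion of $(xy+z)^n$ explicitly and bounds each $z$-term, while you invoke the continuous mapping theorem with $z(t)\to 0$), then factor by independence of $x_\infty$ and $y_\infty$ and finish with the one-locus moment duality. The only cosmetic difference is that your bounded-convergence argument is slightly cleaner than the paper's term-by-term bound $\mathbb{E}[x^{l+n-i}y^{m+n-i}z^i]\le\mathbb{E}[z^i]$, which is stated loosely since $z$ may be negative.
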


\begin{proof}
From Lemma~\ref{duality} we have
\begin{eqnarray*}
{\mathbb E}_{l,m,n}[p^{a(t)}q^{b(t)}g^{c(t)}]
&=&{\mathbb E}_{p,q,g}[x(t)^{l+n}y(t)^{m+n}]\\
&&+\sum_{i=1}^n
\frac{n!}{(n-i)!i!}
{\mathbb E}_{p,q,g}[x(t)^{l+n-i}y(t)^{m+n-i}z(t)^i].
\end{eqnarray*}
It follows from Theorem~\ref{limit} and Lebesgue's convergence theorem that
\begin{equation*}
{\mathbb E}_{p,q,g}[x(t)^{l+n-i}y(t)^{m+n-i}z(t)^i]
\le
{\mathbb E}_{p,q,g}[z(t)^i]\rightarrow 0 \qquad 
\rho\rightarrow\infty,
\end{equation*}
for $t>0$, $i=1,2,...,n$, while
\begin{equation*}
{\mathbb E}_{p,q,g}[x(t)^{l+n}y(t)^{m+n}]
\rightarrow
{\mathbb E}_{p}[x_{\infty}(t)^{l+n}]{\mathbb E}_{q}[y_{\infty}(t)^{m+n}]
={\mathbb E}_{l+n}[p^{a_{\infty}(t)}]{\mathbb E}_{m+n}[q^{b_{\infty}(t)}].
\end{equation*}
as $\rho\rightarrow\infty$. The last equality is a result of the duality
between $\{x_{\infty}(t);t\ge 0\}$ and $\{a_{\infty}(t);t\ge 0\}$, and 
between $\{y_{\infty}(t);t\ge 0\}$ and $\{b_{\infty}(t);t\ge 0\}$.
\end{proof}

Corollary~\ref{limit_pgf} shows that all $AB$ gametes in a sample 
instantaneously split into a pair $A-$ and $-B$ gametes in the limit 
$\rho\rightarrow\infty$. Therefore the length of ${\mathcal C}$ lineages
in an ARG goes to zero in this limit. 

\begin{theorem}
The expected lengths of ${\mathcal C}$ lineages and whole lineages of an ARG
of a sample $(l,m,n)$ until the sample has common ancestor at both of the 
two loci are
\begin{equation}
{\mathbb E}_{l,m,n}\left[\int_0^\tau c(u)du\right]
\rightarrow\frac{2}{\rho}{\mathbb E}_{l,m}
\left[\int_0^{\tau_{\infty}}a_{\infty}(u)b_{\infty}(u)du\right]
+\frac{2n}{\rho},\label{limit_c}
\end{equation}
and
\begin{equation}
{\mathbb E}_{l,m,n}\left[\int_0^\tau (a(u)+b(u)+c(u))du\right]
\rightarrow{\mathbb E}_{l,m}
\left[\int_0^{\tau_{\infty}}(a_{\infty}(u)+b_{\infty}(u))du\right],
\label{limit_all}
\end{equation}
respectively.
\end{theorem}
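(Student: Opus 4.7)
The plan is to combine Corollary~\ref{numrec_exp} with Dynkin's formula applied to the ancestral process, and then pass to the strong-recombination limit using Corollary~\ref{limit_pgf} and Theorem~\ref{limit}.

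For (\ref{limit_c}), I first use Corollary~\ref{numrec_exp} to rewrite the left-hand side as $(2/\rho)\,\mathbb{E}_{l,m,n,0}[s(\tau)]$, so the claim reduces to showing $\mathbb{E}[s(\tau)]\to n + \mathbb{E}[\int_0^{\tau_\infty} a_\infty b_\infty\,du]$. Applying Dynkin's formula to the coordinate $c$, the generator read off from (\ref{transition}) acts as $Lc = -c\rho/2 + ab - c(c-1)/2$, so optional stopping at $\tau$, together with Corollary~\ref{numrec_exp}, yields the clean decomposition
\begin{equation*}
\mathbb{E}[s(\tau)] = n - \mathbb{E}[c(\tau)] + \mathbb{E}\!\left[\int_0^\tau a(u)b(u)\,du\right] - \mathbb{E}\!\left[\int_0^\tau \tfrac{c(u)(c(u)-1)}{2}\,du\right].
\end{equation*}
As $\rho\to\infty$: the stationary distribution computed from Lemma~\ref{duality} gives $\mathbb{E}[c(\tau)] = 2/(2+\rho)\to 0$; the null-coalescence term is $o(1)$ because every visit to a state with $c\ge 2$ has expected duration $O(1/\rho)$, so its total expected count vanishes; and $\mathbb{E}[\int_0^\tau ab\,du]$ converges to $\mathbb{E}[\int_0^{\tau_\infty} a_\infty b_\infty\,du]$ with effective initial state $(l+n,m+n)$ produced by the instantaneous splitting of all $\mathcal{C}$ lineages as quantified in Corollary~\ref{limit_pgf}, via dominated convergence.

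For (\ref{limit_all}) I introduce $A(t) := a(t) + c(t)$ and $B(t) := b(t) + c(t)$, the total ancestor counts at each locus. A direct calculation on the rates (\ref{transition}) shows that recombination and joint coalescence leave $A$ unchanged, while the only transitions that decrement $A$ are marginal coalescence in $\mathcal{A}$ and null coalescence, giving $LA = -A(A-1)/2$; symmetrically $LB = -B(B-1)/2$. Thus the marginals follow $\rho$-independent Kingman coalescents with initial values $l+n$ and $m+n$, and in particular $A(u)\le l+n$, $B(u)\le m+n$ for all $u$. Writing $a+b+c = A + B - c$ and using (\ref{limit_c}) to absorb the $c$-integral (which is of order $1/\rho$), the remaining task is to show $\mathbb{E}[\int_0^\tau(A+B)\,du] \to \mathbb{E}[\int_0^{\tau_\infty}(a_\infty + b_\infty)\,du]$. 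This follows from Corollary~\ref{limit_pgf} applied to the marginals, the weak convergence $\tau\Rightarrow \tau_\infty = T_A\vee T_B$, and dominated convergence using the uniform bounds on $A$ and $B$.

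The principal obstacle is making the $\rho\to\infty$ interchange rigorous inside time-integrals up to the random time $\tau$: one needs joint weak convergence of the paths $(a,b,c)$ together with $\tau$ to $(a_\infty,b_\infty,0)$ and $\tau_\infty$, plus uniform integrability of $\int_0^\tau ab\,du$ and $\int_0^\tau(a+b+c)\,du$ in $\rho$. These require more than the pointwise probability-generating-function convergence supplied by Corollary~\ref{limit_pgf}, but the uniform path bounds $a(u)\le l+n$, $b(u)\le m+n$ together with the finiteness of $\mathbb{E}[\tau]$ (uniform in $\rho$, since the embedded jump chain has exit probability bounded below by the marginal coalescences alone) furnish the required domination.
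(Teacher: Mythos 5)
Your argument is correct in substance but follows a genuinely different route from the paper. The paper works entirely in the $\rho\to\infty$ limit: it posits that each initial ${\mathcal C}$ lineage splits instantaneously (giving $\eta_{l,m,n}=n+\lambda_{l+n,m+n}$), writes down the recursion $0=lm+\tfrac{l(l-1)}{2}(\lambda_{l-1,m}-\lambda_{l,m})+\tfrac{m(m-1)}{2}(\lambda_{l,m-1}-\lambda_{l,m})$ satisfied by the limiting expected recombination count (a joint $ab$-coalescence at rate $lm$ immediately recombines back, registering one event), and identifies its unique solution via Feynman--Kac as ${\mathbb E}_{l,m}[\int_0^{\tau_\infty}a_\infty b_\infty\,du]$; the second limit is handled by the analogous recursion with source term $l+m$. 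You instead derive an exact finite-$\rho$ identity by Dynkin's formula applied to the coordinate $c$ (using $Lc=-c\rho/2+ab-c(c-1)/2$ together with Corollary~\ref{numrec_exp}), namely $\mathbb{E}[s(\tau)]=n-\mathbb{E}[c(\tau)]+\mathbb{E}[\int_0^\tau ab\,du]-\mathbb{E}[\int_0^\tau c(c-1)/2\,du]$, and then send $\rho\to\infty$ term by term; for the second limit your observation that $A=a+c$ and $B=b+c$ are marginally Kingman death chains with $LA=-A(A-1)/2$, so that $a+b+c=A+B-c$ with the $c$-term of order $1/\rho$, is a clean decomposition not present in the paper. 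What your approach buys is an exact pre-limit identity and an explicit treatment of the limit interchange (uniform path bounds $a\le l+n$, $b\le m+n$, and $\mathbb{E}[\tau]\le\mathbb{E}[T_A]+\mathbb{E}[T_B]\le 4$ uniformly in $\rho$), which the paper's proof glosses over; what the paper's approach buys is a shorter derivation that directly exhibits the limiting object as the solution of a boundary value problem.

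One sub-step of yours is misjustified, though the conclusion survives: $\mathbb{E}[c(\tau)]$ is the probability that the chain \emph{enters} the closed set ${\mathcal S}$ at $(0,0,1)$ rather than at $(1,1,0)$, which is not the stationary weight $2/(2+\rho)$ (the stationary distribution describes $t\to\infty$, after the chain has equilibrated inside ${\mathcal S}$, not the hitting distribution at $\tau$). All you need is $\mathbb{E}[c(\tau)]\to 0$, which does hold because entry at $(0,0,1)$ requires the last jump to come from a state with $c\ge 1$, and every visit to such a state exits by recombination with probability $1-O(1/\rho)$ while the expected number of such visits stays bounded; replace the stationary-distribution appeal with this (or any) direct estimate.
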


\begin{proof}
Let 
$\displaystyle\eta_{l,m,n}=\lim_{\rho\rightarrow\infty}
{\mathbb E}_{l,m,n,0}[s(\tau)=0]$ and $\lambda_{l,m}=\eta_{l,m,0}$.
For $(l,m,n)\in {\mathbb Z}_+^3\backslash{\mathbf 0}$ we have
$\eta_{l,m,n}=n+\lambda_{l+n,m+n}$ for $n\ge 1$ and
\begin{equation*}
0=lm-\frac{l(l-1)+m(m-1)}{2}\lambda_{l,m}
+\frac{l(l-1)}{2}\lambda_{l-1,m}+\frac{m(m-1)}{2}\lambda_{l,m-1},
\end{equation*}
with the boundary condition $\lambda_{1,0}=\lambda_{0,1}=\lambda_{1,1}=0$. 
This boundary value problem is uniquely solved by means of 
the Feynman-Kac formula. That is
\begin{equation*}
\lambda_{l,m}={\mathbb E}_{l,m}
\left[\int_0^{\tau_{\infty}}
a_{\infty}(u)b_{\infty}(u)du\right]
\end{equation*}
From Corollary~\ref{numrec_exp}, we have (\ref{limit_c}). Let 
\begin{equation*}
\eta'_{l,m,n}=\lim_{\rho\rightarrow\infty}
{\mathbb E}_{l,m,n,0}\left[\int_0^{\tau_{\infty}}(a(u)+b(u)+c(u))du\right]
\end{equation*}
and $\lambda'_{l,m}=\eta'_{l,m,0}$. For 
$(l,m,n)\in {\mathbb Z}_+^3\backslash{\mathbf 0}$ we have
$\eta'_{l,m,n}=\lambda'_{l,m}$ for $n\ge 1$ and
\begin{equation*}
0=l+m-\frac{l(l-1)+m(m-1)}{2}\lambda'_{l,m}
+\frac{l(l-1)}{2}\lambda'_{l-1,m}+\frac{m(m-1)}{2}\lambda'_{l,m-1},
\end{equation*}
with the boundary condition $\lambda'_{1,0}=\lambda'_{0,1}=\lambda'_{1,1}=0$.
This boundary problem is also solved and we have (\ref{limit_all}).

\end{proof}


\end{document}